\newtheorem{cor}{Corollary}
\newtheorem{observation}{Observation}
\DeclareMathOperator*{\myplus}{\overline{\bigoplus}}
\newcolumntype{P}[1]{>{\centering\arraybackslash}p{#1}}
\author{Vipin Singh Sehrawat \inst{1}, Yvo Desmedt \inst{1,2}}
\institute{Department of Computer Science, The University of Texas at Dallas, USA\\
	\and 
	Department of Computer Science,
	University College London, UK
}
\title{Bi-Homomorphic Lattice-Based PRFs \\ and Unidirectional Updatable Encryption\thanks{This is the full version of the paper that appears in Y. Mu et al. (Eds.): CANS 2019, LNCS 11829, pp. 1–21. DOI: {10.1007/978-3-030-31578-8\_1}~\cite{Vipin[19]}.}}
\titlerunning{Bi-Homomorphic Lattice-Based PRFs and Unidirectional Updatable Encryption}
\authorrunning{V. S. Sehrawat and Y. Desmedt}
\begin{document}
\maketitle

\begin{abstract}
\normalsize We define a pseudorandom function (PRF) $F: \mathcal{K} \times \mathcal{X} \rightarrow \mathcal{Y}$ to be bi-homomorphic when it is fully Key homomorphic and partially Input Homomorphic (KIH), i.e., given $F(k_1, x_1)$ and $F(k_2, x_2)$, there is an efficient algorithm to compute $F(k_1 \oplus k_2, x_1 \ominus x_2)$, where $\oplus$ and $\ominus$ are (binary) group operations. The homomorphism on the input is restricted to a fixed subset of the input bits, i.e., $\ominus$ operates on some pre-decided $m$-out-of-$n$ bits, where $|x_1| = |x_2| = n$, and the remaining $n-m$ bits are identical in both inputs. In addition, the output length, $\ell$, of the operator $\ominus$ is not fixed and is defined as $n \leq \ell \leq 2n$, hence leading to Homomorphically induced Variable input Length (HVL) as $n \leq |x_1 \ominus x_2| \leq 2n$. We present a learning with errors (LWE) based construction for a HVL-KIH-PRF family. Our construction is inspired by the key homomorphic PRF construction due to Banerjee and Peikert (Crypto 2014). 

An updatable encryption scheme allows rotations of the encryption key, i.e., moving existing ciphertexts from old to new key. These updates are carried out via \emph{update tokens}, which can be used by an untrusted party since the update procedure does not involve decryption of the ciphertext. We use our novel PRF family to construct an updatable encryption scheme, named QPC-UE-UU, which is quantum-safe, post-compromise secure and supports unidirectional ciphertext updates, i.e., the update tokens can be used to perform ciphertext updates but they cannot be used to undo already completed updates. Our PRF family also leads to the first left/right key homomorphic constrained-PRF family with HVL.
\keywords{Bi-Homomorphic PRF, Constrained-PRF, LWE, LWR, Updatable Encryption, Unidirectional Updates, Post-Compromise Security.}
\end{abstract}

\section{Introduction}
In a PRF family~\cite{Gold[86]}, each function is specified by a short, random key, and can be easily computed given the key. Yet the function behaves like a random one, in the sense that if you are not given the key, and are computationally bounded, then the input-output behavior of the function looks like that of a random function. Since their introduction, PRFs have been one of the most fundamental building blocks in cryptography. For a PRF $F_s$, the index $s$ is called its key or seed. Many variants of PRFs with additional properties have been introduced and have found a plethora of applications in cryptography. \\[1.5mm]
\textbf{Key Homomorphic (KH) PRFs:} A PRF family $F$ is KH-PRF if the set of keys has a group structure and if there is an efficient algorithm that, given $F_s(x)$ and $F_t(x)$, outputs $F_{s+t}(x)$~\cite{Naor[99]}. Multiple KH-PRF constructions have been proposed via varying approaches~\cite{Naor[99],Boneh[13],Ban[14],Parra[16]}. These functions have many applications in cryptography, such as, symmetric-key proxy re-encryption, updatable encryption, and securing PRFs against related-key attacks~\cite{Boneh[13]}. But, lack of input homomorphism limits the privacy preserving applications of KH-PRFs. For instance, while designing solutions for searchable symmetric encryption~\cite{Curt[06]}, it is highly desirable to hide the search patterns, which can be achieved by issuing random queries for each search. But, this feature cannot be supported if the search index is built by using a KH-PRF family, since it would require identical query (i.e., function input) in order to perform the same search. \\[1.5mm]
\textbf{Constrained PRFs (CPRFs):} Constrained PRFs (also called delegatable PRFs) are another extension of PRFs. They enable a proxy to evaluate a PRF on a strict subset of its domain using a trapdoor derived from the CPRF secret key. A trapdoor is constructed with respect to a certain policy predicate that determines the subset of the input values for which the proxy is allowed to evaluate the PRF. Introduced independently by Kiayias et al.~\cite{Kia[13]}, Boneh et al.~\cite{Rent[13]} and Boyle et al.~\cite{Boyle[14]} (termed functional PRFs), CPRFs have multiple interesting applications, including broadcast encryption, identify-based key exchange, batch query supporting searchable symmetric encryption and RFID. Banerjee et al.~\cite{Ban[15]}, and Brakerski and Vaikuntanathan~\cite{Bra[15]} independently introduced KH-CPRFs.\\[1.5mm] 
\textbf{Variable Input Length (VIL) PRFs:} VIL-PRFs~\cite{Bell[96]} serve an important role in constructing variable length block ciphers~\cite{Mihir[99]} and authentication codes~\cite{Bell[96]}, and are employed in prevalent protocols like Internet Key Exchange (IKEv2). No known CPRF or KH-CPRF construction supports variable input length. \\[2mm] 
\textbf{Updatable Encryption (UE):} In data storage, key rotation refers to the process of (periodically) exchanging the cryptographic key material that is used to protect the data. Key rotation is a desirable feature for cloud storage providers as it can be used to revoke old keys, that might have been comprised, or to enforce data access revocation. All major cloud storage providers (eg. Amazon's Key Management Service~\cite{AWS[00]}, Google Cloud Platform~\cite{Goo[00]}) implement some variants of data-at-rest encryption and hybrid encryption techniques to perform key rotation~\cite{Ever[17]}, which although efficient, do not support full key rotation as the procedures are designed to only update the key encapsulation but not the long-term key. Symmetric updatable encryption, introduced by Boneh et al.~\cite{Boneh[13]} (BLMR henceforth), supports full key rotation without performing decryption, i.e., the ciphertexts created under one key can be securely updated to ciphertexts generated via another key, with the help of a re-encryption/update token. 

Everspaugh et al.~\cite{Ever[17]} pointed out that the UE scheme from BLMR addresses relatively weak confidentiality goals and does not even consider integrity. They proposed a new security notion, named re-encryption indistinguishability, to better capture the idea of fully refreshing the keys upon rotation. They also presented an authenticated encryption based UE scheme, that satisfies the requirements of their new security model. Recently, Lehmann and Tackmann~\cite{Anja[18]} observed that the previous security models/definitions (from BLMR and Everspaugh et al.) do not capture post-compromise security, i.e., the security guarantees after a key compromise. They also proved that neither of the two schemes is post-compromise secure under adaptive attacks. In the same paper, they presented the first UE scheme with post-compromise security. However, the security of that scheme is based on the Decisional Diffie Hellman (DDH) assumption, rendering it vulnerable to quantum computers~\cite{Shor[94]}. It is important to note that all existing UE schemes only support bidirectional updates, i.e., the update token used to refresh the ciphertext for the current epoch can also be used to revert back to the previous epoch's ciphertext. Naturally, this is an undesirable feature. Hence, unidirectional updates~\cite{Anja[18]}, where the update tokens cannot be used to undo the ciphertext updates, is a highly desirable feature for UE schemes. But as mentioned earlier, no existing UE scheme achieves unidirectional updates.  

\subsection{Our Contributions}
Our contributions can be classified into the following two broad classes. 
\begin{enumerate}
	\item \textbf{Novel PRF Classes and Constructions.} We introduce fully Key homomorphic and partially Input Homomorphic (KIH) PRFs with Homomorphically induced Variable input Length (HVL). A PRF, $F$, from such function family satisfies the condition that given $F_{k_1}(x_1)$ and $F_{k_2}(x_2)$, there exists an efficient algorithm to compute $F_{k_1 \oplus k_2}(x_1 \ominus x_2)$, where $|x_1 \ominus x_2| \geq |x_1|~(= |x_2|)$ and the input homomorphism effects only some fixed $m$-out-of-$n$ bits. We present a Learning with Errors (LWE)~\cite{Reg[05]} based construction for such a PRF family. Our construction is inspired by the KH-PRF construction from Banerjee and Peikert~\cite{Ban[14]}. A restricted case of our PRF family leads to another novel PRF class, namely left/right KH-CPRF with HVL.	
	\item \textbf{Quantum-Safe Post-Compromise Secure UE Scheme with Unidirectional Updates.} We use our HVL-KIH-PRF family to construct the first quantum-safe, post-compromise secure updatable encryption scheme with unidirectional updates. We know that the KH-PRF based UE scheme from BLMR is not post-compromise secure because it never updates the nonce~\cite{Anja[18]}. Since our HVL-KIH-PRF family supports input homomorphism, in addition to key homomorphism, it allows updates to the nonce (i.e., the PRF input). Hence, we turn the BLMR UE scheme post-compromise secure by replacing their KH-PRF by our HVL-KIH-PRF, and introducing randomly sampled nonces for the cihpertext updates. The bi-homomorphism of our PRF family also allows us to enforce unidirectional updates. Since our PRF construction is based on the learning with errors (LWE) problem~\cite{Reg[05]}, our UE scheme is also quantum-safe.
\end{enumerate}

\subsection{Organization} Section~\ref{sec2} recalls the necessary background for the rest of the paper. Section~\ref{sec4} introduces and defines KIH-PRF, HVL-KIH-PRF and HVL-KH-CPRF. In Section~\ref{sec5}, we present a LWE-based construction for a HVL-KIH-PRF family, provide its proof of correctness and discuss the different types of input homomorphisms that it supports. Section~\ref{sec6} gives the security proof for our HVL-KIH-PRF construction, while Section~\ref{sec7} analyzes its time complexity. Section~\ref{sec8} presents the construction of left/right HVL-KH-CPRF, that follows from a restricted case of our HVL-KIH-PRF family. In Section~\ref{sec9}, we use of our HVL-KIH-PRF family to construct the first quantum-safe, post-compromise secure updatable encryption scheme with unidirectional updates. Section~\ref{sec10} discusses an interesting open problem, solving which would lead to a novel, search pattern hiding searchable encryption scheme. Section~\ref{sec11} gives the conclusion. 

\section{Background}\label{sec2}
This section recalls the necessary definitions required for the rest of the paper.

\begin{definition}[Negligible Function]
	\emph{For security parameter $\omega$, a function $\epsilon(\omega)$ is called \textit{negligible} if for all $c > 0$ there exists a $\omega_0$ such that $\epsilon(\omega) < 1/\omega^c$ for all $\omega > \omega_0$.}
\end{definition}

\subsection{Learning with Errors}\label{LWE}
The learning with errors (LWE) problem requires to recover a secret $s$ given a sequence of `approximate' random linear equations on it. LWE is known to be hard based on certain assumptions regarding the worst-case hardness of standard lattice problems such as GapSVP (decision version of the Shortest Vector Problem) and SIVP (Shortest Independent Vectors Problem)~\cite{Reg[05],Pei[09]}. Many cryptosystems have been constructed whose security can be proven under the LWE problem, including (identity-based, leakage-resilient, fully homomorphic, functional) encryption~\cite{Reg[05],Gen[08],Adi[09],Reg[10],Shweta[11],Vinod[11],Gold[13]}, oblivious transfer~\cite{Pei[08]}, (blind) signatures~\cite{Gen[08],Vad[09],Markus[10],Vad[12]}, PRFs~\cite{Ban[12]}, KH-PRFs~\cite{Boneh[13],Ban[14]}, KH-CPRFs~\cite{Ban[15],Bra[15]}, hash functions~\cite{Katz[09],Pei[06]}, etc. 

\begin{definition}[Decision-LWE~\cite{Reg[05]}]\label{decisionLWE} 
	\emph{For positive integers $n$ and $q \geq 2$, and an error (probability) distribution $\chi = \chi(n)$ over $\mathbb{Z}_q$, the decision-LWE${}_{n, q, \chi}$ problem is to distinguish between the following pairs of distributions: 
		\[(\textbf{A}, \textbf{A}^T \textbf{s} + \textbf{e}) \quad \text{and} \quad (\textbf{A}, \textbf{u}),\] 
		where $m = \poly(n), \textbf{A} \xleftarrow{\; \$ \;} \mathbb{Z}^{n \times m}_q, \textbf{s} \xleftarrow{\; \$ \;} \mathbb{Z}^n_q, \textbf{e} \xleftarrow{\; \$ \;} \chi^m,$ and $\textbf{u} \xleftarrow{\; \$ \;} \mathbb{Z}^m_q$.}
\end{definition} 

\begin{definition}[Search-LWE~\cite{Reg[05]}]\label{searchLWE}
	\emph{For positive integers $n$ and $q \geq 2$, and an error (probability) distribution $\chi = \chi(n)$ over $\mathbb{Z}_q$, the search-LWE${}_{n, q, \chi}$ problem is to recover $\textbf{s} \in \mathbb{Z}^n_q$, given $m (= \poly(n))$ independent samples of $(\textbf{A}, \textbf{A}^T \textbf{s} + \textbf{e})$, where $\textbf{A} \xleftarrow{\; \$ \;} \mathbb{Z}^{n \times m}_q, \textbf{s} \xleftarrow{\; \$ \;} \mathbb{Z}^n_q,$ and $\textbf{e} \xleftarrow{\; \$ \;} \chi^m.$}
\end{definition}

Regev~\cite{Reg[05]} showed that for a certain noise distribution $\chi$ and a sufficiently large $q$, the LWE problem is as hard as the worst-case SIVP (Shortest Independent Vectors Problem) and GapSVP (decision version of the Shortest Vector Problem) under a quantum reduction (see also~\cite{Pei[09],Bra[13]}). These results have been extended to show that $\textbf{s}$ can be sampled from a low norm distribution (in particular, from the noise distribution $\chi)$ and the resulting problem is as hard as the basic LWE problem~\cite{Benny[09]}. Similarly, the noise distribution $\chi$ can be a simple low-norm distribution~\cite{Micci[13]}. Note that the seed and error vectors in the definitions can be replaced by matrices of appropriate dimensions, that are sampled from the same distributions as the vectors. Such interchange does not affect the hardness of LWE~\cite{Pie[12]}.

\subsection{Learning with Rounding}\label{LWR}
Based on a conjectured hard-to-learn function, Naor and Reingold~\cite{Naor[9]} proposed synthesizers as a foundation to construct PRFs. At first glance, using LWE as the hard learning problem looks a valid option but Naor and Reingold's synthesizer requires a deterministic hard-to-learn function, whereas LWE's hardness depends the random, independent errors that are deliberately added to every output. In fact, without any error, LWE becomes trivially easy to learn. So, the main obstacle in constructing efficient lattice/LWE-based PRFs was finding a way to introduce (sufficiently independent) error terms into each of the exponentially many function outputs, while still keeping the function deterministic. 

Banerjee et al.~\cite{Ban[12]} introduced the LWR problem, in which instead of adding a small random error as done in LWE, a \textit{deterministically} rounded version of the sample is released. In particular, for some $p < q$, the elements of $\mathbb{Z}_q$ are divided into $p$ contiguous intervals of roughly $q/p$ elements each. The rounding function is defined as: $\lfloor \cdot \rceil_p: \mathbb{Z}_q \rightarrow \mathbb{Z}_p$, that maps $x \in \mathbb{Z}_q$ into the index of the interval that $x$ belongs to. Note that the error is introduced only when $q > p$, with ``absolute'' error being roughly equal to $q/p$, resulting in the ``error rate'' (relative to $q)$ to be on the order of $1/p$. For a security parameter $\lambda$, they defined the rounding function $\lfloor \cdot \rceil: \mathbb{Z}_q \rightarrow \mathbb{Z}_p$, where $q \geq p \geq 2$, as:
\[\lfloor x \rceil_p = \left\lfloor \frac{p}{q} \cdot x \right\rceil.\]

That is, if $\lfloor x \rceil_p = i$, then $i \cdot \lfloor q/p \rceil$ is the integer multiple of $\lfloor q/p \rceil$ that is nearest to $x$. So, $x$ is deterministically rounded to the nearest element of a sufficiently ``coarse'' public subset of $p \ll q$, well-separated values in $\mathbb{Z}_q$ (e.g., a subgroup). Thus, the ``error term'' comes solely from deterministically rounding $x$ to a relatively nearby value in $\mathbb{Z}_p$. The problem of distinguishing such rounded products from uniform samples is called the decision-learning with rounding problem, abbreviated as decision-LWR${}_{n,q,p}$. Banerjee et al.\ proved decision-LWR${}_{n,q,p}$ to be as hard as decision-LWE for a setting of parameters where the modulus and modulus-to-error ratio are super-polynomial. 

\begin{definition}
	\emph{Let $n \geq 1$ be the security parameter and moduli $q \geq p \geq 2$ be integers.
		\begin{itemize}
			\item For a vector $\textbf{s} \in \mathbb{Z}^n_q$, define the LWR distribution $L_\textbf{s}$ to be the distribution over $\mathbb{Z}^n_q \times \mathbb{Z}_p$ obtained by choosing a vector $\textbf{a} \leftarrow \mathbb{Z}^n_q$ uniformly at random, and outputting $(\textbf{a},b = \lfloor \langle \textbf{a},\textbf{s} \rangle \rceil_p).$
		\end{itemize}
	}	
\end{definition}

For a given distribution over $\textbf{s} \in \mathbb{Z}^n_q$ (e.g., the uniform distribution), the decision-LWR${}_{n,q,p}$ problem is to distinguish (with advantage non-negligible in $n)$ between any desired number of independent samples $(\textbf{a}_i,b_i) \leftarrow L_\textbf{s}$, and the same number of samples drawn uniformly and independently from $\mathbb{Z}^n_q \times \mathbb{Z}_p$.

\begin{theorem}[\cite{Ban[12]}]
	Let $\chi$ be any efficiently sampleable $B$-bounded distribution over $\mathbb{Z}$, and let $q \geq p \cdot B \cdot \lambda(1)$, with $\lambda$ being the security parameter. Then for any distribution over the secret $\textbf{s} \in \mathbb{Z}^n_q$ , solving the decision-LWR${}_{n,q,p}$ problem is at least as hard as solving decision-LWE${}_{n, q, \chi}$ for the same distribution over $\textbf{s}$.
\end{theorem}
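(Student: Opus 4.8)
The plan is to prove the stated hardness relation by exhibiting an explicit reduction in the convenient direction: I will show how to turn any distinguisher $\mathcal{D}$ for decision-LWR${}_{n,q,p}$ into a distinguisher $\mathcal{D}'$ for decision-LWE${}_{n,q,\chi}$ of essentially the same advantage, so that the ability to solve LWR implies the ability to solve LWE (i.e.\ LWR is at least as hard). The whole argument rests on one elementary transformation — \emph{rounding the LWE samples} — together with the observation that, under the hypothesized modulus gap, rounding almost always absorbs the $B$-bounded LWE noise and hence maps an LWE sample to a genuine LWR sample.

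Concretely, $\mathcal{D}'$ receives its LWE challenge $(\textbf{A}, \textbf{y})$, where $\textbf{y} \in \mathbb{Z}^m_q$ is either $\textbf{A}^T\textbf{s} + \textbf{e}$ with $\textbf{e} \leftarrow \chi^m$ or uniform $\textbf{u}$; it applies the rounding function entrywise to $\textbf{y}$ and pairs the $i$-th rounded entry with the $i$-th column $\textbf{a}_i$ of $\textbf{A}$, producing $m$ candidate samples $(\textbf{a}_i, \lfloor y_i \rceil_p) \in \mathbb{Z}^n_q \times \mathbb{Z}_p$ that it forwards to $\mathcal{D}$, echoing $\mathcal{D}$'s output. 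First I would analyze the two cases. When $\textbf{y}$ is uniform, $(\textbf{a}_i, \lfloor y_i \rceil_p)$ is uniform over $\mathbb{Z}^n_q \times \mathbb{Z}_p$ — exactly so when $p \mid q$, and otherwise off by at most $O(p/q)$ per sample, which is negligible under the gap. When $\textbf{y}$ is an LWE vector, I want $\lfloor \langle \textbf{a}_i, \textbf{s}\rangle + e_i \rceil_p$ to equal $\lfloor \langle \textbf{a}_i, \textbf{s}\rangle \rceil_p$, in which case the forwarded pair is distributed exactly as a draw from $L_\textbf{s}$.

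The hard part — and the only place the super-polynomial gap between $q$ and $p \cdot B$ is used — will be bounding the probability of the \emph{borderline} event in which adding $e_i$ pushes $\langle \textbf{a}_i, \textbf{s}\rangle$ across a rounding boundary, so that the two roundings disagree. Because the boundaries are spaced $\approx q/p$ apart and $|e_i| \leq B$, disagreement can only occur when $\langle \textbf{a}_i, \textbf{s}\rangle$ lies within distance $B$ of a multiple of $q/p$; since $\textbf{a}_i$ is uniform, $\langle \textbf{a}_i, \textbf{s}\rangle$ is (near-)uniform over its support in $\mathbb{Z}_q$, so this region is hit with probability on the order of $pB/q$, which is negligible precisely when $q/(pB)$ is super-polynomial. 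I would then union-bound over the $m = \poly(n)$ samples to conclude that, except with negligible probability, \emph{every} LWE entry rounds to a legitimate LWR sample.

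Putting the pieces together, conditioned on the overwhelmingly likely complement of the borderline event, $\mathcal{D}'$ hands $\mathcal{D}$ a faithful LWR-versus-uniform challenge, so $\mathcal{D}'$ inherits $\mathcal{D}$'s distinguishing advantage up to the negligible statistical losses accumulated above. Hence any non-negligible advantage against decision-LWR${}_{n,q,p}$ yields a non-negligible advantage against decision-LWE${}_{n,q,\chi}$ for the same distribution over $\textbf{s}$, which is exactly the claimed relation.
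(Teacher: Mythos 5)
The paper itself gives no proof of this statement — it is quoted verbatim as background from~\cite{Ban[12]} — so your proposal can only be measured against the proof of the cited result. Your reduction is the right one and in the right direction (round the LWE challenge entrywise and feed it to the LWR distinguisher), and your case analysis of the uniform branch is fine. The gap is in the borderline analysis. You bound the probability that $\langle \textbf{a}_i,\textbf{s}\rangle$ falls within distance $B$ of a rounding boundary by $O(pB/q)$ on the grounds that $\langle \textbf{a}_i,\textbf{s}\rangle$ is ``(near-)uniform over its support.'' But the theorem is claimed for \emph{any} distribution over $\textbf{s}$ and places no primality or coprimality condition on $q$: the support of $\langle \textbf{a}_i,\textbf{s}\rangle$ is the subgroup of $\mathbb{Z}_q$ generated by $\gcd(s_1,\dots,s_n,q)$, which can be very small, and a constant fraction of a small subgroup can sit on or next to rounding boundaries. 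For example, with $\textbf{s}=(q/2,0,\dots,0)$ and $p$ odd, $\langle \textbf{a}_i,\textbf{s}\rangle$ equals $q/2$ — exactly a boundary point — with probability $1/2$, and then $\lfloor \langle \textbf{a}_i,\textbf{s}\rangle+e_i\rceil_p$ disagrees with $\lfloor \langle \textbf{a}_i,\textbf{s}\rangle\rceil_p$ with constant probability over the sign of $e_i$. So the purely statistical bound on the borderline event is false in general, and this is the one step of the argument that actually carries the theorem's ``for any distribution over the secret'' clause.

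The repair, which is how~\cite{Ban[12]} actually close this step, is to bound the borderline event \emph{computationally} rather than statistically: the predicate ``$b_i$ lies within distance $2B$ of a rounding boundary'' is efficiently testable from the sample itself, holds with probability $O(pB/q)$ when $b_i$ is uniform, and holds whenever $\langle \textbf{a}_i,\textbf{s}\rangle$ is within $B$ of a boundary in the LWE case. Hence if the borderline event had noticeable probability, this tester would itself be a decision-LWE distinguisher; consequently $\Pr[\text{borderline}] \leq O(pB/q) + \mathrm{Adv}_{\mathrm{LWE}}$, which is all the hybrid needs (in the degenerate-secret examples above, LWE is simply easy and the claim holds vacuously, but the advantage accounting must reflect this). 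If you replace your uniformity claim with this second invocation of the LWE assumption — and union-bound over the $m$ samples as you already do — the rest of your argument goes through.
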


Alwen et al.~\cite{Alwen[13]} gave a new reduction that works for a larger range of parameters, allowing for a polynomial modulus and modulus-to-error ratio. Bogdanov et al.~\cite{Andrej[16]} gave a more relaxed and general version of the theorem proved in~\cite{Alwen[13]}. LWR has been used to construct efficient pseudorandom generators and functions~\cite{Ban[12],Ban[14]}. Prior to the introduction of LWR, all constructions of weaker primitives such as symmetric authentication protocols~\cite{Hopper[01],Juels[05],Katz[10]}, randomized weak PRFs~\cite{Benny[09]}, and message-authentication codes~\cite{Kiltz[11],Pie[12]} from noisy-learning problems were inherently randomized functions, where security relies on introducing fresh noise at every invocation. Due to its ease of use and efficiency, several schemes, such as Saber~\cite{Jan[18]} and Round5~\cite{Hayo[19]}, along with some homomorphic encryption solutions~\cite{Ana[17]}, have based their hardness on LWR.

\subsection{LWE-Based KH-PRFs}\label{foll}
Due to small error being involved, LWE based KH-PRF constructions~\cite{Boneh[13],Ban[14]} only achieve `almost homomorphism', which is defined as:
\begin{definition}[\cite{Boneh[13]}]
	\emph{Let $F: \mathcal{K} \times \mathcal{X} \rightarrow \mathbb{Z}^m_p$ be an efficiently computable function such that $(\mathcal{K}, \oplus)$ is a group. We say that the tuple $(F, \oplus)$ is a $\gamma$-almost key homomorphic PRF if the following two properties hold:}
	\begin{enumerate}
		\item \emph{$F$ is a secure pseudorandom function.}
		\item \emph{For every $k_1, k_2 \in \mathcal{K}$ and every $x \in \mathcal{X}$, there exists a vector $\textbf{e} \in [0, \gamma]^m$ such that: $F_{k_1}(x) + F_{k_2}(x) = F_{k_1 \oplus k_2}(x) + \textbf{e} \bmod p$.}
	\end{enumerate}
\end{definition}
Boneh et al.~\cite{Boneh[13]} gave the first standard-model constructions of KH-PRFs using lattices/LWE. They proved their PRF to be secure under the $m$-dimensional (over an $m$-dimensional lattice) LWE assumption, for error rates $\alpha = m^{-\mathrm{\Omega}(l)}$. Later, Banerjee and Peikert~\cite{Ban[14]} gave KH PRFs from substantially weaker LWE assumptions, e.g., error rates $\alpha = m^{-\mathrm{\Omega}(\log l)}$, yielding improved performance. 

\section{Novel PRF Classes: Definitions}\label{sec4}
In this section, we formally define KIH-PRF, HVL-KIH-PRF and HVL-KH-CPRF. For convenience, our definitions assume seed and error matrices instead of vectors. Note that such interchange does not affect the hardness of LWE~\cite{Pie[12]}.\\[1.5mm]
\textbf{Notations.} We begin by defining the important notations.
\begin{enumerate}[topsep=1mm]
	\item $x = x_{\ell} || x_{r}$, where $1 \leq |x_{\ell}| \leq \lfloor |x|/2 \rfloor$ and $|x_{r}| = |x| - |x_{\ell}|$.
	\item $x_a = x_{a.\ell} || x_{a.r}$, where $1 \leq |x_{a.\ell}| \leq \lfloor |x_a|/2 \rfloor$ and $|x_{a.r}| = |x_a| - |x_{a.\ell}|$.
\end{enumerate}
\textbf{Assumption.} Since the new PRF classes exhibit partial input homomorphism, without loss the generality, we assume $x_r$ to be the homomorphic portion of the input $x$, with $x_\ell$ being the static/fixed portion. Obviously, the definitions remain valid if these are swapped, i.e., if $x_\ell$ is taken to be the homomorphic portion of the input with fixed/static $x_r$. 

\subsection{KIH-PRF}
\begin{definition}\label{Prem} 
	\emph{Let $F: \mathcal{K} \times \mathcal{X'} \times \mathcal{X} \rightarrow \mathbb{Z}^{m \times m}_p$ be a PRF family, such that $(\mathcal{K}, \oplus)$ and $(\mathcal{X}, \ominus)$ are groups. We say that the tuple $(F, \ominus, \oplus)$ is a $\gamma$-almost fully key and partially input homomorphic PRF if the following condition holds:}
	\begin{itemize}
		\item \emph{For every $k_1, k_2 \in \mathcal{K}$, with $x_{1.r}, x_{2.r} \in \mathcal{X}$ and $x_{1.\ell}, x_{2.\ell} \in \mathcal{X'}$, such that $x_{1.\ell} = x_{2.\ell} = x_{\ell}$, there exists a vector $\textbf{E} \in [0,\gamma]^{m \times m}$ such that: 
		\[F_{k_1}(x_{1.\ell} || x_{1.r}) + F_{k_2}(x_{2.\ell} || x_{2.r}) + \textbf{E} = F_{k_1 \oplus k_2}(x_\ell || x_r) \bmod p,\] 
		where $x_r = x_{1.r} \ominus x_{2.r}$.}  
	\end{itemize}
\end{definition}

\subsection{HVL-KIH-PRF}

\begin{definition}\label{PremDef} 
	\emph{Let $\mathcal{X} \subset \mathcal{Y}$, with $\ominus$ defining the surjective mapping: $\mathcal{X} \ominus \mathcal{X} \rightarrow \mathcal{Y}$. Let $F: \mathcal{K} \times \mathcal{X'} \times \mathcal{X} \rightarrow \mathbb{Z}^{m \times m}_p$ and $F': \mathcal{K} \times \mathcal{X'} \times \mathcal{Y} \rightarrow \mathbb{Z}^{m \times m}_p$ be two PRF families, where $(\mathcal{K}, \oplus)$ is a group. We say that the tuple $(F, \ominus, \oplus)$ is a $\gamma$-almost fully key and partially input homomorphic PRF with homomorphically induced variable input length, if the following condition holds:}
\begin{itemize}[topsep=1mm]
\item \emph{For every $k_1, k_2 \in \mathcal{K}$, and with $x_{1.r}, x_{2.r} \in \mathcal{X}$ and $x_{1.\ell}, x_{2.\ell} \in \mathcal{X'}$, such that $x_{1.\ell} = x_{2.\ell} = x_{\ell}$, there exists a vector $\textbf{E} \in [0,\gamma]^{m \times m}$ such that: 
\[F_{k_1}(x_{1.\ell} || x_{1.r}) + F_{k_2}(x_{2.\ell} || x_{2.r}) + \textbf{E} = F'_{k_1 \oplus k_2}(x_\ell, y) \bmod p,\]
where $y = x_{1.r} \ominus x_{2.r}$.}
\end{itemize}
\end{definition}

\subsection{Left/Right KH-CPRF with HVL}

\begin{definition}\label{KH-CPRF-Def} 
	\emph{(Left KH-CPRF with HVL:) Let $\mathcal{X} \subset \mathcal{Y}$, with $\ominus$ defining the surjective mapping: $\mathcal{X} \ominus \mathcal{X} \rightarrow \mathcal{Y}$. Let $F: \mathcal{K} \times \mathcal{W} \times \mathcal{X} \rightarrow \mathbb{Z}^{m \times m}_p$ and $F': \mathcal{K} \times \mathcal{W} \times \mathcal{Y} \rightarrow \mathbb{Z}^{m \times m}_p$ be two PRF families, where $(\mathcal{K}, \oplus)$ is a group. We say that the tuple $(F, \ominus, \oplus)$ is a left key homomorphic constrained-PRF with homomorphically induced variable input length, if for any $k_0 \in \mathcal{K}$ and a fixed $w \in \mathcal{W}$, given $F_{k_0}(w || x) \in \mathcal{F}$, where $x \in \mathcal{X}$, there exists an efficient algorithm to compute $F'_{k_0 \oplus k_1}(w, y) \in \mathcal{F'}$, for all $k_1 \in \mathcal{K}$ and $y \in \mathcal{Y}$.\\[1.5mm]
	(Right KH-CPRF with HVL:) Let $\mathcal{X} \subset \mathcal{Y}$, with $\ominus$ defining the surjective mapping: $\mathcal{X} \ominus \mathcal{X} \rightarrow \mathcal{Y}$. Let $F: \mathcal{K} \times \mathcal{X} \times \mathcal{W} \rightarrow \mathbb{Z}^{m \times m}_p$ and $F': \mathcal{K} \times \mathcal{Y} \times \mathcal{W} \rightarrow \mathbb{Z}^{m \times m}_p$ be two PRF families, where $(\mathcal{K}, \oplus)$ is a group. We say that the tuple $(F, \ominus, \oplus)$ is a right key homomorphic constrained-PRF with homomorphically induced variable input length, if for any $k_0 \in \mathcal{K}$ and a fixed $w \in \mathcal{W}$, given $F_{k_0}(x || w) \in \mathcal{F}$, where $x \in \mathcal{X}$, there exists an efficient algorithm to compute $F'_{k_0 \oplus k_1}(y, w) \in \mathcal{F'}$, for all $k_1 \in \mathcal{K}$ and $y \in \mathcal{Y}$.}
\end{definition}

\section{LWE-Based HVL-KIH-PRF Construction}\label{sec5}
In this section, we present the first construction for a HVL-KIH-PRF family. Our construction is based on the LWE problem, and is inspired from the KH-PRF construction by Banerjee and Peikert~\cite{Ban[14]}. 

\subsection{Rounding Function} Let $\lambda$ be the security parameter. Define a rounding function, $\lfloor \cdot \rceil: \mathbb{Z}_q \rightarrow \mathbb{Z}_p$, where $q \geq p \geq 2$, as:
	\[\lfloor x \rceil_p = \left\lfloor \frac{p}{q} \cdot x \right\rceil.\]
That is, if $\lfloor x \rceil_p = i$, then $i \cdot \lfloor q/p \rceil$ is the integer multiple of $\lfloor q/p \rceil$ that is nearest to $x$. So, $x$ is deterministically rounded to the nearest element of a sufficiently ``coarse'' public subset of $p \ll q$, well-separated values in $\mathbb{Z}_q$ (e.g., a subgroup). Thus, the ``error term'' comes solely from deterministically rounding $x$ to a relatively nearby value in $\mathbb{Z}_p$. As described in Section~\ref{LWR}, the problem of distinguishing such rounded products from uniform samples is called the decision-learning with rounding (LWR${}_{n,q,p})$ problem. The rounding function is extended component wise to vectors and matrices over $\mathbb{Z}_q$.
\subsection{Definitions} Let $l = \lceil \log q \rceil$ and $d = l+1$. Define a gadget vector as:
	\[\textbf{g} = (0,1,2,4,\dots, 2^{l-1}) \in \mathbb{Z}^d_q.\]
Define a deterministic decomposition function $\textbf{g}^{-1}: \mathbb{Z}_q \rightarrow \{0,1\}^d$, such that $\textbf{g}^{-1}(a)$ is a ``short'' vector and $ \forall a \in \mathbb{Z}_q$, it holds that: $\langle \textbf{g}, \textbf{g}^{-1}(a) \rangle = a$, where $\langle \cdot \rangle$ denotes the inner product. The function $\textbf{g}^{-1}$ is defined as:
\[\textbf{g}^{-1}(a) = (x', x_0, x_1, \dots, x_{l-1}) \in \{0,1\}^d,\]
where $x' = 0$, and $a = \sum\limits^{l-1}_{i=0} x_i 2^i$ is the binary representation of $a$. The gadget vector is used to define the gadget matrix $\textbf{G}$ as:
\[\textbf{G} = \textbf{I}_n \otimes \textbf{g} = diag(\textbf{g}, \dots, \textbf{g}) \in \mathbb{Z}^{n \times nd}_q,\]  
where $\textbf{I}_n$ is the $n \times n$ identity matrix and $\otimes$ denotes the Kronecker product. The binary decomposition function, $\textbf{g}^{-1}$, is applied entry-wise to vectors and matrices over $\mathbb{Z}_q$. Thus, $\textbf{g}^{-1}$ is extended to get another deterministic decomposition function $\textbf{G}^{-1}: \mathbb{Z}^{n \times m}_q \rightarrow \{0,1\}^{nd \times m}$, such that, $\textbf{G} \cdot \textbf{G}^{-1}(\textbf{A}) = \textbf{A}$. The addition operations \textit{inside} the binary decomposition functions $\textbf{g}^{-1}$ and $\textbf{G}^{-1}$ are performed as simple integer operations (over all integers $\mathbb{Z}$), and not done in $\mathbb{Z}_q$.\label{Prop}

\subsection{Main Construction} 
The following notations are frequently used throughout this section. 
\begin{itemize}
	\item $x_{\ell h}$: left half of $x$, such that $|x_{\ell h}| = \lfloor |x|/2 \rfloor$,
	\item $x_{rh}$: right half of $x$, such that $|x_{rh}| = \lceil |x|/2 \rceil$,
	\item  $x[i]$: $i^{th}$ bit of bit-string $x$.
\end{itemize}
Let $T$ be a full binary tree with at least one node, with $T.r$ and $T.\ell$ denoting its right and left subtree, respectively. For random matrices $\textbf{A}_0, \textbf{A}_1 \in \mathbb{Z}^{n \times nd}_{q}$, define function $\textbf{A}_T: \{0,1\}^{|T|} \rightarrow \mathbb{Z}^{n \times nd}_q$ recursively as:
\begin{align*}
\textbf{A}_T(x) &= 
\begin{cases}
\textbf{A}_x \qquad \qquad \qquad \qquad \qquad \qquad \quad \;  \:\text{if } |T| = 1 \\
\textbf{A}_{T.\ell}(x_{\ell}) + \textbf{A}_{x[0]} \textbf{G}^{-1}(\textbf{A}_{T.r}(x_{r})) \quad \text{otherwise},
\end{cases}
\end{align*}
where $x = x_{\ell} || x_{r}$, $x_{\ell} \in \{0,1\}^{|T.\ell|}, x_{r} \in \{0,1\}^{|T.r|}$, and $|T|$ denotes the number of leaves in $T$. Based on the random seed $\textbf{S} \in \mathbb{Z}^{n \times nd}_{q}$, the KIH-PRF family, $\mathcal{F}_{(\textbf{A}_0, \textbf{A}_1, T, p)}$, is defined as: 
\[\mathcal{F}_{(\textbf{A}_0, \textbf{A}_1, T, p)} = 
\left\lbrace F_\textbf{S}: \{0,1\}^{2|T|} \longrightarrow \mathbb{Z}^{nd \times nd}_p \right\rbrace.\]
Two \textit{seed dependent} matrices, $\textbf{B}_0, \textbf{B}_1 \in \mathbb{Z}^{n \times nd}_q$, are defined as: 
\[\textbf{B}_0 = \textbf{A}_0 + \textbf{S}; \qquad \qquad \textbf{B}_1 = \textbf{A}_{1} + \textbf{S},\]
Using the seed dependent matrices, a function $\textbf{B}^{\textbf{S}}_T(x)$ is defined recursively as: 
\begin{align*}
\textbf{B}^{\textbf{S}}_T(x) &= 
\begin{cases}
\textbf{B}_x \qquad \qquad \qquad \qquad \qquad \qquad \; \, \qquad \text{if } |T| = 1 \\
\textbf{B}^{\textbf{S}}_{T.\ell}(x_{\ell}) + \textbf{A}_{x[0]} \textbf{G}^{-1}(\textbf{B}^{\textbf{S}}_{T.r}(x_{r})) \qquad \text{otherwise},
\end{cases}
\end{align*}
Let $R: \{0,1\}^{|T|} \rightarrow \mathbb{Z}_q^{nd \times n}$ be a pseudorandom generator. Let $y = y_{\ell h} || y_{rh}$, where $y_{\ell h},y_{rh} \in \{0,1\}^{|T|}$. In order to keep the length the equations in check, we represent the product $R(y_{\ell h}) \cdot \textbf{A}_{y[0]}$ by the \textit{notation:} $R_0(y_{\ell h})$. A member of the KIH-PRF family is indexed by the seed $\textbf{S}$ as:
\begin{equation}\label{deffunc1}
F_{\textbf{S}}(y) := \lfloor \textbf{S}^T \cdot \textbf{A}_{T}(y_{\ell h}) + R_0(y_{\ell h}) \cdot \textbf{G}^{-1} (\textbf{B}^{\textbf{S}}_{T}(y_{rh})) \rceil_p. 
\end{equation}
Let $\bar{0} = 00$, i.e., it represents two consecutive $0$ bits. We define the following function family:
\[
\mathcal{F'}_{(\mathbb{A},T,p)} = 
\left\lbrace F'_{\textbf{S}}: \{0,1\}^{|T|} \times \{0,1,\bar{0}\}^{|T|} \longrightarrow \mathbb{Z}^{nd \times nd}_p \right\rbrace,
\]
where $\mathbb{A} = \{\textbf{A}_0, \textbf{A}_1, \textbf{B}_0, \textbf{B}_1, \textbf{C}_0, \textbf{C}_1, \overline{\textbf{C}}_0\}$, and the matrices $\textbf{C}_1, \textbf{C}_0, \overline{\textbf{C}}_0$ are defined by the seed $\textbf{S} \in \mathbb{Z}^{n \times nd}_q$ as: 
\[\textbf{C}_1 =\textbf{A}_0 + \textbf{B}_1; \quad \overline{\textbf{C}}_0 = \textbf{A}_0 + \textbf{B}_0; \quad \textbf{C}_0 = \textbf{A}_1 + \textbf{B}_1.\]
Define a function $\textbf{C}_T: \{0,1\}^{|T|} \times \{0,1,\bar{0}\}^{|T|} \rightarrow \mathbb{Z}^{n \times nd}_q$ recursively as:
\begin{align*}
\textbf{C}^{\textbf{S}}_T(x) &= 
\begin{cases}
\textbf{C}_x \qquad \qquad \qquad \qquad \qquad \quad \; \qquad  \text{if } |T| = 1 \\
\overline{\textbf{C}}_0 \qquad \qquad \qquad \qquad \qquad \qquad \quad \; \text{if } |T| > 1 ~\bigwedge~ x[i] = x[i+1] = 0\\ 
\textbf{C}^{\textbf{S}}_{T.\ell}(x_{\ell}) + \textbf{A}_{x[0]} \textbf{G}^{-1}(\textbf{C}^{\textbf{S}}_{T.r}(x_{r})) \quad \text{otherwise},
\end{cases}
\end{align*}
i.e., $\overline{\textbf{C}}_0$ denotes two bits. Hence, during the evaluation of $\textbf{C}^\textbf{S}_T(x)$, a leaf in $T$ may represent one bit or two bits. Let $z = z_0 || z_1$, where $z_{0} \in \{0,1\}^{|T|}$ and $z_{1} \in \{0,1,\bar{0}\}^{|T|}$. A member of the function family $\mathcal{F'}_{(\mathbb{A},T,p)}$ is defined as:
\begin{equation}\label{deffunc}
F'_{\textbf{S}}(z_0, z_1) := \lfloor \textbf{S}^T \cdot \textbf{A}_{T}(z_0) + R_0(z_{0}) \cdot \textbf{G}^{-1} (\textbf{C}^{\textbf{S}}_{T}(z_1)) \rceil_p,
\end{equation}
where $R_0(z_{0}) = R(z_{0}) \cdot \textbf{A}_{z_0[0]}$. Similar to the KH-PRF construction from~\cite{Ban[14]}, bulk of the computation performed while evaluating the PRFs, $F_{\textbf{S}}(x)$ and $F'_{\textbf{S}}(x_0,x_1)$, is in computing the functions $\textbf{A}_T(x), \textbf{B}^{\textbf{S}}_T(x), \textbf{C}^{\textbf{S}}_T(x)$. While computing these functions on an input $x$, if all the intermediate matrices are saved, then $\textbf{A}_T(x'),$\\ $\textbf{B}^{\textbf{S}}_T(x'), \textbf{C}^{\textbf{S}}_T(x')$ can be incrementally computed for a $x'$ that differs from $x$ in a single bit. Specifically, one only needs to recompute the matrices for those internal nodes of $T$ which appear on the path from the leaf representing the changed bit to the root. Hence, saving the intermediate matrices, that are generated while evaluating the functions on an input $x$ can speed up successive evaluations on the related inputs $x'$.

\subsection{Proof of Correctness}
The homomorphically induced variable length (HVL) for our function family follows from the fact that $\{0,1\}^{|T|} \subset \{0,1,\bar{0}\}^{|T|}$. So, we move on to defining and proving the fully key and partially input homomorphic property for our function family. We begin by introducing a commutative binary operation, called `almost XOR', which is denoted by $\bar{\oplus}$ and defined by the truth table given in Table~\ref{tab1}. 
\begin{table}
	\vspace{-4mm}
	\centering
\begin{tabular}{P{1.5cm}P{1cm}}
	\hline \hline
	$1 ~\bar{\oplus}~ 1$ & $= 0$\\
	$0 ~\bar{\oplus}~ 0$ & $= 00$\\ 
	$0~ \bar{\oplus}~ 1$ & $= 1$\\ \hline \hline
\end{tabular}
\caption{\label{tab1}Truth Table for `almost XOR' operation, $\bar{\oplus}$}
\vspace{-6mm}
\end{table}
	
\begin{theorem}\label{thm1}
For any inputs $x, y \in \{0,1\}^{|T|}$ and a full binary tree $|T|$ such that: $x_{\ell h} = y_{\ell h} = z_0$ and $x_{rh} \bar{\oplus} y_{rh} = z_1$, where $z_0 \in \{0,1\}^{|T|}$, and $z_1 \in \{0,1,\bar{0}\}^{|T|}$, the following holds: 
\begin{equation}\label{homoEq}
F'_{(\textbf{S}_1 + \textbf{S}_2)}(z_0,z_1) = F_{\textbf{S}_1}(x) + F_{\textbf{S}_2}(y) + \textbf{E},  
\end{equation}
where $||\textbf{E}||_{\infty} \leq 1.$
\end{theorem}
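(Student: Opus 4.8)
The plan is to strip off the outer rounding and compare the two pre-rounding arguments over $\mathbb{Z}_q$, then invoke the almost-additivity of $\lfloor\cdot\rceil_p$ to recover the $\pm1$ bound on $\textbf{E}$. Writing $\widetilde{F}_{\textbf{S}}(y) = \textbf{S}^T\textbf{A}_T(y_{\ell h}) + R_0(y_{\ell h})\,\textbf{G}^{-1}(\textbf{B}^{\textbf{S}}_T(y_{rh}))$ for the argument before rounding (and $\widetilde{F}'$ analogously), I would first use the hypothesis $x_{\ell h}=y_{\ell h}=z_0$ to collapse the seed-linear part: the terms $\textbf{S}_1^T\textbf{A}_T(z_0)$ and $\textbf{S}_2^T\textbf{A}_T(z_0)$ add to $(\textbf{S}_1+\textbf{S}_2)^T\textbf{A}_T(z_0)$ because $\textbf{A}_T$ is seed-independent, and the prefactor $R_0(z_0)=R(z_0)\textbf{A}_{z_0[0]}$ is identical in both summands. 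This isolates the entire content of the theorem into a single claim about the right halves, namely that the seed-dependent matrices produced by $\textbf{B}^{\textbf{S}}_T$ on $x_{rh}$ and $y_{rh}$ recombine into the matrix produced by $\textbf{C}^{\textbf{S}}_T$ on $z_1=x_{rh}\,\bar{\oplus}\,y_{rh}$.

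The core lemma I would try to establish is the matrix identity
\[
\textbf{B}^{\textbf{S}_1}_T(x_{rh}) + \textbf{B}^{\textbf{S}_2}_T(y_{rh}) = \textbf{C}^{\textbf{S}_1+\textbf{S}_2}_T(z_1),
\]
by induction on the tree $T$. The base case $|T|=1$ is a finite verification against the three lines of the almost-XOR truth table: using $\textbf{B}_i=\textbf{A}_i+\textbf{S}$ together with $\textbf{C}_0=\textbf{A}_1+\textbf{B}_1$, $\textbf{C}_1=\textbf{A}_0+\textbf{B}_1$ and $\overline{\textbf{C}}_0=\textbf{A}_0+\textbf{B}_0$, one checks that $1\,\bar{\oplus}\,1\mapsto0$ yields $2\textbf{A}_1+\textbf{S}_1+\textbf{S}_2$, that $0\,\bar{\oplus}\,1\mapsto1$ (and its commutative partner) yields $\textbf{A}_0+\textbf{A}_1+\textbf{S}_1+\textbf{S}_2$, and that $0\,\bar{\oplus}\,0\mapsto\bar{0}$ yields $2\textbf{A}_0+\textbf{S}_1+\textbf{S}_2$; in each case the sum of the two $\textbf{B}$-leaves matches the corresponding $\textbf{C}$-leaf, which is precisely the reason the special two-bit leaf $\overline{\textbf{C}}_0$ must exist. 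For the inductive step I would split $T$ into $T.\ell,T.r$, apply the hypothesis to each subtree, and try to push the identity through the recursion $\,\cdot_{T.\ell}(\cdot)+\textbf{A}_{\cdot[0]}\textbf{G}^{-1}(\,\cdot_{T.r}(\cdot))$, checking that the leading-bit selector $z_1[0]$ and the left/right split of $z_1$ remain aligned with those of $x_{rh}$ and $y_{rh}$ despite the length change induced by $\bar{0}$.

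The step I expect to be the real obstacle is reconciling this matrix identity with the \emph{outer} decomposition in the PRF, because $\textbf{G}^{-1}$ is not additive: the sum $F_{\textbf{S}_1}(x)+F_{\textbf{S}_2}(y)$ contains $\textbf{G}^{-1}(\textbf{B}^{\textbf{S}_1}_T(x_{rh}))+\textbf{G}^{-1}(\textbf{B}^{\textbf{S}_2}_T(y_{rh}))$, whereas $F'_{\textbf{S}_1+\textbf{S}_2}(z_0,z_1)$ contains $\textbf{G}^{-1}(\textbf{C}^{\textbf{S}_1+\textbf{S}_2}_T(z_1))$, and the binary decomposition of a sum differs from the sum of binary decompositions whenever a carry occurs. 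The crux of the argument will therefore be to show that the construction is arranged so that this carry discrepancy either vanishes or is confined to the low-order rounding regime; concretely, I would argue that the quantity surviving after the seed-linear cancellation is small relative to $q/p$, so that composing with $\lfloor\cdot\rceil_p$ and invoking $\lfloor a\rceil_p+\lfloor b\rceil_p=\lfloor a+b\rceil_p+e$ with $e\in\{-1,0,1\}$ componentwise yields exactly $\|\textbf{E}\|_\infty\le1$. Pinning this last error to $1$ rather than to a larger accumulated bound — verifying that nothing beyond a single rounding step leaks into $\textbf{E}$ — is where I expect the delicate bookkeeping to lie, and it is the point at which the precise choices of $\textbf{G}^{-1}$, $R_0$ and the $\bar{\oplus}$ rule must all cooperate.
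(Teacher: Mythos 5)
Your overall architecture matches the paper's: collapse the seed-linear part using $x_{\ell h}=y_{\ell h}=z_0$, reduce everything to the recombination identity $\textbf{B}^{\textbf{S}_1}_T(x_{rh})+\textbf{B}^{\textbf{S}_2}_T(y_{rh})=\textbf{C}^{\textbf{S}_1+\textbf{S}_2}_T(z_1)$, and absorb a single rounding error via $\lfloor a+b\rceil_p=\lfloor a\rceil_p+\lfloor b\rceil_p+e$. Your base-case check of that identity against the $\bar{\oplus}$ truth table (and your explanation of why the two-bit leaf $\overline{\textbf{C}}_0$ must exist) is actually more explicit than what the paper writes down. The genuine gap is in how you propose to handle the non-additivity of $\textbf{G}^{-1}$, which you correctly single out as the crux. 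Your concrete plan --- argue that the carry discrepancy between $\textbf{G}^{-1}(\textbf{B}^{\textbf{S}_1}_T(x_{rh}))+\textbf{G}^{-1}(\textbf{B}^{\textbf{S}_2}_T(y_{rh}))$ and $\textbf{G}^{-1}(\textbf{C}^{\textbf{S}_1+\textbf{S}_2}_T(z_1))$ is ``small relative to $q/p$'' and hence swallowed by the rounding --- does not work. Any such discrepancy is a nonzero integer vector that is multiplied on the left by $R_0(z_0)=R(z_0)\cdot\textbf{A}_{z_0[0]}$, a full-rank pseudorandom matrix over $\mathbb{Z}_q$; even a $\pm 1$ discrepancy in a single coordinate of the decomposition therefore perturbs the pre-rounding value by an essentially arbitrary element of $\mathbb{Z}_q$, not by something of order $q/p$. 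There is no smallness left to exploit after that multiplication.

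What the paper actually relies on is that the discrepancy \emph{vanishes identically}: $\textbf{g}^{-1}$ and $\textbf{G}^{-1}$ are deliberately defined with output length $d=l+1$ rather than $l$, and the additions inside the decomposition are carried out over $\mathbb{Z}$ rather than $\mathbb{Z}_q$, precisely so that Equation~\ref{linearEq}, $\textbf{G}^{-1}(\textbf{A}_i+\textbf{A}_j)=\textbf{G}^{-1}(\textbf{A}_i)+\textbf{G}^{-1}(\textbf{A}_j)$, holds exactly: the extra slot accommodates the carry, and the ``sum of decompositions'' is read as a componentwise integer vector sum rather than re-normalized to binary. Combined with Observation~\ref{observ}, which shows that each $\textbf{G}^{-1}$ in the unwound recursion receives at most two direct arguments, this makes the entire pre-rounding expression exactly linear in the seed and in the $\textbf{B}/\textbf{C}$ matrices, so the only error is the single rounding error $e$, giving $\|\textbf{E}\|_\infty\le 1$ with no further bookkeeping. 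Without invoking this exact linearity of the (widened) decomposition, your argument stalls at precisely the step you flag as delicate.
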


\begin{proof}
We begin by making an important observation, and arguing about its correctness.
\begin{observation}\label{observ}
	\emph{The HVL-KIH-PRF family, defined in Equation.~\ref{deffunc1}, requires both addition and multiplication operations for each function evaluation. Hence, adding the outputs of two functions, $F_{\textbf{S}_1}$ and $F_{\textbf{S}_2}$, from the function family $\mathcal{F}$ translates into adding the outputs per node of the tree $T$. As a result, the decomposition function $\textbf{G}^{-1}$ for each node in $T$ takes one of the following three forms:} 
	\begin{enumerate}
		\item \emph{$\textbf{G}^{-1}(\textbf{A}_b + \textbf{A}_{x[0]} \cdot \textbf{G}^{-1}(\cdot))$,}
		\item \emph{$\textbf{G}^{-1}(\textbf{A}_b + \textbf{A}_{x[0]} \cdot \textbf{G}^{-1}(\cdot) + \dots + \textbf{A}_{x_0} \cdot \textbf{G}^{-1}(\cdot))$,} 
		\item \emph{$\textbf{G}^{-1}(\textbf{A}_b),~b \in \{0,1\}$,}   
	\end{enumerate}
	\emph{where $\textbf{G}^{-1}(\cdot)$ denotes possibly nested $\textbf{G}^{-1}$. Note that each term $\textbf{A}_{x[0]} \cdot \textbf{G}^{-1}(\cdot)$ or a summation of such terms, i.e., $\textbf{A}_{x[0]} \cdot \textbf{G}^{-1}(\cdot) + \dots + \textbf{A}_{x[0]} \cdot \textbf{G}^{-1}(\cdot)$, yields some matrix $\breve{\textbf{A}} \in \mathbb{Z}^{n \times nd}_q$. Hence, each decomposition function $\textbf{G}^{-1}$ in the recursively unwound function $F'_{(\textbf{s}_1 + \textbf{s}_2)}(z_0,z_1)$ has at most two ``direct'' inputs/arguments (i.e., $\textbf{A}_b$ and $\breve{\textbf{A}}$).} 
\end{observation}	

Recall from Section~\ref{Prop} that for the ``direct'' arguments of $\textbf{G}^{-1}$, addition operations are performed as simple integer operations (over all integers $\mathbb{Z}$) instead of being done in $\mathbb{Z}_q$. We know from Observation~\ref{observ}, that unwinding of the recursive function $F'_{\textbf{S}}(z_0,z_1)$ yields each binary decomposition function $\textbf{G}^{-1}$ with at most two ``direct'' inputs. We also know that binary decomposition functions $(\textbf{g}^{-1} \text{ and } \textbf{G}^{-1})$ are linear, provided there is no carry bit or there is an additional bit to accommodate the possible carry. Hence, by virtue of the extra bit, $d-l$ (where $l = \lceil \log q \rceil$, and $d=l+1)$, each $\textbf{G}^{-1}$ behaves as a linear function during the evaluation of our function families, i.e., the following holds:
\begin{equation}\label{linearEq}
\textbf{G}^{-1}(\textbf{A}_i + \textbf{A}_j) = \textbf{G}^{-1}(\textbf{A}_i) + \textbf{G}^{-1}(\textbf{A}_j),	
\end{equation}
where $\textbf{G}^{-1}(\textbf{A}_i) + \textbf{G}^{-1}(\textbf{A}_j)$ is component-wise vector addition of the $n$, $d$ bits long bit vectors of the columns, $[\textbf{v}_1, \dots, \textbf{v}_{nd}] \in \mathbb{Z}^{1 \times nd}$, of $\textbf{G}^{-1}(\textbf{A}_i)$ with the $n$, $d$ bits long bit vectors of the columns, $[\textbf{w}_1, \dots, \textbf{w}_{nd}] \in \mathbb{Z}^{1 \times nd}_q$, of $\textbf{G}^{-1}(\textbf{A}_j)$. We are now ready to prove Equation~\ref{homoEq}. Since $y_{\ell h} = x_{\ell h}$, we use $x_{\ell h}$ to represent both, as that helps clarity. Let $\textbf{S} = \textbf{S}_1 + \textbf{S}_2$, then by using Equation~\ref{deffunc}, we can write the LHS of Equation~\ref{homoEq} as: $\lfloor \textbf{S}^T \cdot \textbf{A}_{T}(z_0) + R_0(z_{0}) \cdot \textbf{G}^{-1} (\textbf{C}^{\textbf{S}}_{T}(z_1)) \rceil_p.$\\[1.5mm] 
Similarly, from Equation~\ref{deffunc1}, we get RHS of Equation~\ref{homoEq} equal to: \\[1mm] 
$\lfloor \textbf{S}_1^T \cdot \textbf{A}_{T}(x_{\ell h}) + R_0(x_{\ell h}) \cdot \textbf{G}^{-1}(\textbf{B}^{\textbf{S}_1}_{T}(x_{rh}))\rceil_p + \lfloor \textbf{S}_2^T \cdot \textbf{A}_{T}(x_{\ell h}) + R_0(x_{\ell h}) \cdot \textbf{G}^{-1}(\textbf{B}^{\textbf{S}_2}_{T}(y_{rh})) \rceil_p + \textbf{E}.$ \\[1mm]
We know that $\lfloor a + b \rceil_p = \lfloor a \rceil_p + \lfloor b \rceil_p + e$. We further know that $x_{\ell h} = z_0$, and $R_0(x_{\ell h}) = R_0(z_{0})$. Thus, from Equation~\ref{linearEq}, the RHS can be written as:\\[1mm]
$\lfloor (\textbf{S}_1 + \textbf{S}_2)^T \cdot \textbf{A}_{T}(z_0) + R_0(z_{0}) \cdot \textbf{G}^{-1} (\textbf{B}^{\textbf{S}_1}_{T}(x_{rh}) + \textbf{B}^{\textbf{S}_2}_{T}(y_{rh})) \rceil_p$\\[1mm]
$ = \lfloor \textbf{S}^T \cdot \textbf{A}_{T}(z_0) + R_0(z_{0}) \cdot \textbf{G}^{-1} (\textbf{C}^\textbf{S}_{T}(x_{rh} \bar{\oplus} y_{rh})) \rceil_p$\\[1mm]
$= \lfloor \textbf{S}^T \cdot \textbf{A}_{T}(z_0) + R_0(z_{0}) \cdot \textbf{G}^{-1} (\textbf{C}^\textbf{S}_{T}(z_1) \rceil_p = LHS.$ \qed
\end{proof}

\section{Security Proof}\label{sec6}
The security proofs as well as the time complexity analysis of our construction depend on the tree $T$. Left and right depth of $T$ are respectively defined as the maximum left and right depths over all leaves in $T$. The modulus $q$, the underlying LWE error rate, and the dimension $n$ needed to obtain a desired level of provable security, are largely determined by two parameters of $T$. The first one, called \textit{expansion e(T)}~\cite{Ban[14]}, is defined as: 
\begin{align*}
e(T) &= 
\begin{cases}
0 \qquad \qquad \qquad \qquad \qquad \qquad \qquad \quad \text{if} \: |T| = 1 \\
max\{e(T.\ell)+1, e(T.r)\} \qquad \qquad \quad \text{otherwise}.
\end{cases}
\end{align*}
For our construction, $e(T)$ is the maximum number of terms of the form $\textbf{G}^{-1}(\cdot)$ that get consecutively added together when we unwind the recursive definition of the function $\textbf{A}_T$. The second parameter is called \textit{sequentiality}~\cite{Ban[14]}, which gives the maximum number of nested $\textbf{G}^{-1}$ functions, and is defined as:
\begin{align*}
s(T) &= 
\begin{cases}
0 \qquad \qquad \qquad \qquad \qquad \qquad \qquad \quad \text{if } |T| = 1 \\
max\{s(T.\ell), s(T.r)+1\} \qquad \qquad \quad \text{otherwise}.
\end{cases}
\end{align*}
For our function families, over the uniformly random and independent choice of $\textbf{A}_0, \textbf{A}_1, \textbf{S} \in \mathbb{Z}^{n \times nd}_q$, and with the secret key chosen uniformly from $\mathbb{Z}^n_q$, the modulus-to-noise ratio for the underlying LWE problem is: $q/r \approx (n \log q)^{e(T)}$. Known reductions~\cite{Reg[05],Pei[09],Bra[13]} (for $r \geq 3 \sqrt{n}$) guarantee that such a LWE instantiation is at least as hard as approximating hard lattice problems like GapSVP and SIVP, in the worst case to within $\approx q/r$ factors on $n$-dimensional lattices. Known algorithms for achieving such factors take time exponential in $n/\log(q/r) = \tilde{\Omega}(n/e(T))$. Hence, in order to obtain provable $2^{\lambda}$ (where $\lambda$ is the input length) security against the best known lattice algorithms, the best parameter values are the same as defined for the KH-PRF construction from~\cite{Ban[14]}, which are:
\begin{equation}\label{Eqn5.3}
n = e(T) \cdot \tilde{\Theta}(\lambda) \;  \; \; \text{and} \; \; \; \log q = e(T) \cdot \tilde{\Theta} (1). 
\end{equation}

\subsection{Overview of KH-PRF from~\cite{Ban[14]}}\label{OverBan}
As mentioned earlier, our construction is inspired by the KH-PRF construction from~\cite{Ban[14]}. Our security proofs rely on the security of that construction. Therefore, before moving to the security proofs, it is necessary that we briefly recall the KH-PRF construction from~\cite{Ban[14]}. Although that scheme differs from our KIH PRF construction, certain parameters and their properties are identical. The rounding function $\lfloor \cdot \rceil_p$, binary tree $T$, gadget vector/matrix $\textbf{g}/\textbf{G}$, the binary decomposition functions $\textbf{g}^{-1}/\textbf{G}^{-1}$ and the base matrices $\textbf{D}_0, \textbf{D}_1$ in that scheme are defined similarly to our construction. There is a difference in the definitions of the decomposition functions, which for our construction are defined as (see Section~\ref{Prop}): $\textbf{g}^{-1}: \mathbb{Z}_q \rightarrow \{0,1\}^d$ and $\textbf{G}^{-1}: \mathbb{Z}^{n \times m}_q \rightarrow \{0,1\}^{nd \times m},$ i.e., the dimensions of the output space for our decomposition functions has $d(=l+1)$ instead of $l = \lceil \log q \rceil$ as in~\cite{Ban[14]}. Recall that the extra (carry) bit ensures that Equation~\ref{linearEq} holds. 
\subsubsection{KH-PRF Construction from~\cite{Ban[14]}}\label{BanSec}
Given two uniformly selected matrices, $\textbf{D}_0, \textbf{D}_1 \in \mathbb{Z}^{n \times nl}_q$, define function $\textbf{D}_T(x): \{0,1\}^{|T|} \rightarrow \mathbb{Z}^{n \times nl}_q$ as:
\begin{equation}\label{BanFunc}
\begin{aligned}
\textbf{D}_T(x) &= 
\begin{cases}
\textbf{D}_x \qquad \qquad \qquad \qquad \qquad \qquad \quad \text{if } |T| = 1 \\
\textbf{D}_{T.\ell}(x_{\ell t}) \cdot \textbf{G}^{-1}(\textbf{D}_{T.r}(x_{rt})) \qquad \; \; \text{otherwise},
\end{cases}
\end{aligned}
\end{equation} 
where $x = x_{\ell t} || x_{rt}$, for $x_{\ell t} \in \{0,1\}^{|T.\ell|}, x_{rt} \in \{0,1\}^{|T.r|}$. The KH-PRF function family is defined as:
\[	\mathcal{H}_{\textbf{D}_0, \textbf{D}_1, T, p} = 
	\left\lbrace H_\textbf{s}: \{0,1\}^{|T|} \rightarrow \mathbb{Z}^{nl}_p \right\rbrace,
\]
where $p \leq q$ is the modulus. A member of the function family $\mathcal{H}$ is indexed by the seed $\textbf{s} \in \mathbb{Z}^n_q$ as: $H_{\textbf{s}}(x) = \lfloor \textbf{s} \cdot \textbf{D}_{T}(x) \rceil_p.$

For the sake of completeness, we recall the main security theorem from~\cite{Ban[14]}.
\begin{theorem}[\cite{Ban[14]}]\label{BanThm}
	Let $T$ be any full binary tree, $\chi$ be some distribution over $\mathbb{Z}$ that is subgaussian with parameter $r > 0$ (e.g., a bounded or discrete Gaussian distribution with expectation zero), and
	\[q \geq p \cdot r \sqrt{|T|} \cdot (nl)^{e(T)} \cdot \lambda^{\omega(1)},\]
	where $\lambda$ is the input size. Then over the uniformly random and independent choice of $\textbf{D}_0, \textbf{D}_1 \in \mathbb{Z}^{n \times nl}_q$, the family $\mathcal{H}_{\textbf{D}_0,\textbf{D}_1,T,p}$ with secret key chosen uniformly from $\mathbb{Z}^n_q$ is a secure PRF family, under the decision-LWE${}_{n,q,\chi}$ assumption.
\end{theorem}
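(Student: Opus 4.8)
The plan is to prove pseudorandomness of $\mathcal{H}_{\textbf{D}_0,\textbf{D}_1,T,p}$ through a sequence of hybrid games that transforms the adversary's view from the real evaluations $\lfloor \textbf{s}^T \textbf{D}_T(x) \rceil_p$ to uniformly random outputs, invoking decision-LWE${}_{n,q,\chi}$ polynomially many times. The central difficulty is that $H_\textbf{s}$ is a \emph{deterministic} function, whereas LWE samples carry fresh independent noise; the bridge is the observation that over a coarse enough modulus, rounding a noiseless inner product is statistically indistinguishable from rounding the same inner product perturbed by subgaussian noise. So the first move is to replace the deterministic rounded function by a \emph{noisy} variant $\lfloor \textbf{s}^T \textbf{D}_T(x) + \textbf{e}_x^T \rceil_p$, where $\textbf{e}_x$ is drawn from the accumulated error distribution induced by the tree, and to argue this replacement is undetectable provided the rounding never lands near an interval boundary.

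The heart of the argument is an induction on the structure of $T$ showing that the noisy products $\textbf{s}^T \textbf{D}_T(x) + \textbf{e}_x^T$ are jointly pseudorandom across all queried inputs $x$. Unwinding the recursion $\textbf{D}_T(x) = \textbf{D}_{T.\ell}(x_{\ell t})\,\textbf{G}^{-1}(\textbf{D}_{T.r}(x_{rt}))$, I would set up hybrids indexed by the internal nodes of $T$, processed from the leaves upward. At a node the subterm $\textbf{s}^T \textbf{D}_{T.\ell}(x_{\ell t})$ plays the role of an LWE secret-times-public-matrix term, and because $\textbf{G}^{-1}(\cdot)$ outputs a fixed low-norm binary matrix, multiplying a pseudorandom (equivalently, LWE-distributed) left factor by it and adding subgaussian noise yields exactly a fresh batch of LWE samples. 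Each hybrid step therefore replaces one honestly computed subproduct by a uniformly random matrix under decision-LWE${}_{n,q,\chi}$, and after all internal nodes are handled the top-level output is uniform.

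To make the telescoping rigorous I would track the error distribution explicitly: the noise at a node is its left child's noise plus the $\textbf{G}^{-1}$-transformed noise from its right subtree, so the worst-case subgaussian parameter after unwinding grows multiplicatively in the number of $\textbf{G}^{-1}$ factors that get added in sequence, which is precisely the expansion $e(T)$. This yields an accumulated error of subgaussian parameter on the order of $r\sqrt{|T|}\,(nl)^{e(T)}$. The stated bound $q \geq p \cdot r\sqrt{|T|}\,(nl)^{e(T)} \cdot \lambda^{\omega(1)}$ is then exactly what is needed for the very first step: it guarantees that for every query the noiseless value $\textbf{s}^T \textbf{D}_T(x)$ lies within distance comparable to $q/p$ of an interval boundary only with negligible probability, so the deterministic and noisy roundings agree except with negligible probability, and the superpolynomial slack $\lambda^{\omega(1)}$ absorbs a union bound over all $\poly(\lambda)$ queries.

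The main obstacle I expect is the quantitative noise analysis rather than the game sequence itself: one must prove that the error accumulated along the recursion stays subgaussian with the claimed parameter, and that the borderline-rounding event (the inner product falling within the noise magnitude of a boundary) is negligible \emph{simultaneously} for all adaptively chosen queries. Establishing the subgaussian tail bound through repeated multiplication by binary $\textbf{G}^{-1}$ matrices and summation across the tree, and then converting the statistical ``rounding absorbs noise'' claim into a bound that survives the adaptive, polynomially large query set, is the delicate part; the dependence on $e(T)$ in both $n$ and $\log q$ (Equation~\ref{Eqn5.3}) is a direct consequence of keeping this accumulated noise below the rounding granularity.
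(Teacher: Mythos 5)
This theorem is imported verbatim from Banerjee and Peikert~\cite{Ban[14]}; the present paper states it ``for the sake of completeness'' and contains no proof of it, so there is no internal proof to compare your attempt against. Measured against the proof in the cited source, your sketch follows essentially the same route: one first argues that the deterministic rounded product $\lfloor \textbf{s}^T\textbf{D}_T(x)\rceil_p$ agrees with a noise-perturbed variant except when the value falls within the noise magnitude of a rounding boundary --- which is exactly where the condition $q \geq p\cdot r\sqrt{|T|}\,(nl)^{e(T)}\cdot\lambda^{\omega(1)}$ and the union bound over polynomially many adaptive queries are consumed --- and then proves pseudorandomness of the noisy variant by a hybrid induction over the internal nodes of $T$, using decision-LWE to replace $\textbf{s}^T\textbf{D}_{T.\ell}(x_{\ell t})+\textbf{e}^T$ by uniform before it is multiplied by the low-norm binary matrix $\textbf{G}^{-1}(\textbf{D}_{T.r}(x_{rt}))$, the error growing by a factor of at most $nl$ per expansion level, whence $(nl)^{e(T)}$. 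The one imprecision worth flagging is directional: since the recursion is the product $\textbf{D}_{T.\ell}(x_{\ell t})\cdot\textbf{G}^{-1}(\textbf{D}_{T.r}(x_{rt}))$, it is the \emph{left} subtree's accumulated error that gets transformed by $\textbf{G}^{-1}(\textbf{D}_{T.r}(\cdot))$ (with fresh noise injected at the node), rather than ``the left child's noise plus the $\textbf{G}^{-1}$-transformed noise from its right subtree'' as you wrote; this does not affect the $(nl)^{e(T)}$ accounting or the rest of the argument.
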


\subsection{Security Proof of Our Construction}
The dimensions and bounds for the parameters $r,q,p,n,m$ and $\chi$ in our construction are the same as in~\cite{Ban[14]}. We begin by defining the necessary terminology.
\begin{enumerate}
	\item Reverse-LWE: is an LWE instance $\textbf{S}^T \textbf{A} + \textbf{E}$ with secret lattice-basis $\textbf{A}$ and public seed matrix $\textbf{S}$. 
	\item Reverse-LWR: is defined similarly, i.e., $\lfloor \textbf{S}^T \textbf{A} \rceil_p$ with secret $\textbf{A}$ and public $\textbf{S}$. 
	\item If $H$ represents the binary entropy function, then we know that for uniformly random $\textbf{A} \in \mathbb{Z}^{n \times nd}_q$ and a random seed $\textbf{S} \in \mathbb{Z}^{n \times nd}_q$, it holds that: $H(\textbf{A}) = H(\textbf{S})$. Hence, it follows from elementary linear algebra that reverse-LWR${}_{n,q,p}$ and reverse-LWE${}_{n,q,\chi}$ are at least as hard as decision-LWR${}_{n,q,p}$ and decision-LWE${}_{n,q,\chi}$, respectively. 
\end{enumerate}

\begin{observation}\label{obs1}
	\emph{Consider the function family $\mathcal{F}_{(\textbf{A}_0, \textbf{A}_1, T,p)}$. We know that a member of the function family is defined by a random seed $\textbf{S} \in \mathbb{Z}^{n \times nd}_q$ as:\\[1.5mm]
	$F_{\textbf{S}}(x) = \lfloor \textbf{S}^T \cdot \textbf{A}_{T}(x_{\ell h}) + R_0(x_{\ell h}) \cdot \textbf{G}^{-1}(\textbf{B}^{\textbf{S}}_{T}(x_{rh})) \rceil_p =  \underbrace{\lfloor \textbf{S}^T \cdot \textbf{A}_{T}(x_{\ell h}) \rceil_p}_{\textbf{L}_T(x_{\ell h})} + \underbrace{\lfloor R_0(x_{\ell h}) \cdot \textbf{G}^{-1}(\textbf{B}^{\textbf{S}}_{T}(x_{rh})) \rceil_p}_{\textbf{R}_T(x_{rh})} + \textbf{E}.$}
\end{observation}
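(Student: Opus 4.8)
The plan is to obtain the claimed additive splitting directly from the near-additivity of the rounding function $\lfloor \cdot \rceil_p$ --- the very same property already exploited in the proof of Theorem~\ref{thm1}. Recall that for any $u, v \in \mathbb{Z}_q$ one has $\lfloor u + v \rceil_p = \lfloor u \rceil_p + \lfloor v \rceil_p + e$ with $e \in \{-1,0,1\}$, since writing $(p/q)u$ and $(p/q)v$ each as an integer plus a fractional part in $(-1/2,1/2]$ forces the two fractional parts to sum to a value in $(-1,1]$, which rounds to something of magnitude at most $1$. This scalar fact is the only analytic ingredient the argument requires.

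First I would isolate the two matrix summands sitting inside the outer rounding in Equation~\ref{deffunc1}, namely $\textbf{U} = \textbf{S}^T \textbf{A}_T(x_{\ell h})$ and $\textbf{V} = R_0(x_{\ell h})\,\textbf{G}^{-1}(\textbf{B}^{\textbf{S}}_T(x_{rh}))$, both of which are elements of $\mathbb{Z}^{nd \times nd}_q$. Because $\lfloor \cdot \rceil_p$ acts componentwise on matrices, I would apply the scalar identity above to each of the $(nd)^2$ entries of $\textbf{U}+\textbf{V}$ independently and gather the per-entry rounding discrepancies into one matrix $\textbf{E}$. Every entry of $\textbf{E}$ then lies in $\{-1,0,1\}$, so $\|\textbf{E}\|_\infty \leq 1$, and we obtain $\lfloor \textbf{U}+\textbf{V} \rceil_p = \lfloor \textbf{U} \rceil_p + \lfloor \textbf{V} \rceil_p + \textbf{E}$. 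Reading off the two rounded summands as $\lfloor \textbf{U} \rceil_p = \textbf{L}_T(x_{\ell h})$ and $\lfloor \textbf{V} \rceil_p = \textbf{R}_T(x_{rh})$ yields the observation verbatim; note that the subscripting in $\textbf{R}_T(x_{rh})$ suppresses its (harmless) dependence on $x_{\ell h}$ through the factor $R_0(x_{\ell h})$.

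Since the statement is, at bottom, a single componentwise instance of rounding near-additivity, I do not expect any serious obstacle. The one point meriting a moment's care is confirming that the errors do not accumulate across entries: one must check that isolating the worst coordinate still gives $\|\textbf{E}\|_\infty \leq 1$ rather than a bound growing with $nd$. This holds precisely because the split is performed exactly once per entry --- a single rounding of a sum of two terms, not an iterated rounding over many summands --- so each coordinate absorbs at most one discrepancy in $\{-1,0,1\}$. Finally I would emphasize why the decomposition is worth stating: the left part $\textbf{L}_T(x_{\ell h}) = \lfloor \textbf{S}^T \textbf{A}_T(x_{\ell h}) \rceil_p$ is exactly a reverse-LWR sample in the secret $\textbf{A}_T(x_{\ell h})$ with public seed $\textbf{S}$, which is the quantity the ensuing security reduction is designed to attack.
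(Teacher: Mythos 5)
Your proposal is correct and matches the paper's (implicit) justification exactly: the paper treats this observation as an immediate consequence of the entrywise near-additivity of the rounding function, $\lfloor a + b \rceil_p = \lfloor a \rceil_p + \lfloor b \rceil_p + e$, which is the same scalar fact you apply to each of the $(nd)^2$ entries of $\textbf{S}^T \textbf{A}_T(x_{\ell h}) + R_0(x_{\ell h})\,\textbf{G}^{-1}(\textbf{B}^{\textbf{S}}_T(x_{rh}))$ before collecting the discrepancies into $\textbf{E}$. Your added remarks on why the error does not accumulate across entries and on the reverse-LWR reading of $\textbf{L}_T(x_{\ell h})$ are consistent with how the paper uses the decomposition in the subsequent security argument.
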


\begin{observation}\label{obs2}
	\emph{For $|T| \geq 1$ and $x \in \{0,1\}^{2|T|}$, each $\textbf{L}_T(x_{\ell h})$ is the sum of the following three types of terms:}
	\begin{enumerate}
		\item \emph{Exactly one term of the form: $\lfloor \textbf{S}^T \cdot \textbf{A}_{x[0]} \rceil_p$, corresponding to the leftmost child of the full binary tree $T$ and the most significant bit, $x[0]$, of $x$.}
		\item \emph{At least one term of the following form: 
			\[\lfloor \textbf{S}^T \cdot \textbf{A}_{x[0]} \cdot \textbf{G}^{-1}(\textbf{A}_{x[i]}) \rceil_p; \qquad (1 \leq i \leq 2|T|),\]
		corresponding to the right child at level $1$ of the full binary tree $T$.}
		\item \emph{Zero or more terms with nested $\textbf{G}^{-1}$ functions of the form: 
			\[\lfloor \textbf{S}^T \cdot \textbf{A}_{x[0]} \cdot \textbf{G}^{-1}(\textbf{A}_{x[i]} + \textbf{A}_{x[0]} \cdot \textbf{G}^{-1}(\cdot) + \dots) \rceil_p; \qquad (1 \leq i \leq 2|T|).\]
		}
	\end{enumerate}
\end{observation}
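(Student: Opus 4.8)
The plan is to establish the claimed decomposition by structural induction on $|T|$, unwinding the recursive definition of $\textbf{A}_T$ from Section~\ref{sec5} along the left spine of the tree. The observation that drives the argument is that descending into $T.\ell$ in the recursion $\textbf{A}_T(x) = \textbf{A}_{T.\ell}(x_\ell) + \textbf{A}_{x[0]}\,\textbf{G}^{-1}(\textbf{A}_{T.r}(x_r))$ never alters the most significant bit $x[0]$, since the left-half input of every node on the left spine shares its leading bit with $x_{\ell h}$. Hence the factor $\textbf{A}_{x[0]}$ is common to every summand that the unwinding produces, which is exactly what the three stated forms reflect.

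First I would fully unwind $\textbf{A}_T(x_{\ell h})$ along the left spine, so that each internal node on this spine contributes one summand $\textbf{A}_{x[0]}\,\textbf{G}^{-1}(\textbf{A}_{T_j.r}(\cdot))$, where $T_j.r$ is the right subtree hanging off that node, while the terminal (leftmost) leaf contributes the single base summand $\textbf{A}_{x[0]}$. This writes $\textbf{A}_T(x_{\ell h})$ as $\textbf{A}_{x[0]} + \sum_j \textbf{A}_{x[0]}\,\textbf{G}^{-1}(\textbf{A}_{T_j.r}(\cdot))$. Multiplying on the left by $\textbf{S}^T$ and applying $\lfloor\cdot\rceil_p$, I would then distribute the rounding over this sum using the almost-linearity already exploited in the proof of Theorem~\ref{thm1}, namely $\lfloor a+b\rceil_p = \lfloor a\rceil_p + \lfloor b\rceil_p + e$, folding the accumulated discrepancy into the bounded error $\textbf{E}$ of Observation~\ref{obs1}. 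The result is precisely one term $\lfloor \textbf{S}^T \textbf{A}_{x[0]}\rceil_p$ of type~1 together with one rounded term per right subtree.

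It then remains to classify each right-subtree summand according to the size of $T_j.r$, which is where the induction hypothesis enters. If $T_j.r$ is a single leaf, then $\textbf{A}_{T_j.r}(\cdot) = \textbf{A}_{x[i]}$ for the appropriate bit index $i$, yielding a term of type~2. If instead $T_j.r$ has more than one leaf, applying the recursion once more to $T_j.r$ exposes a nested $\textbf{G}^{-1}$, matching the type~3 form with $\textbf{A}_{x[i]} + \textbf{A}_{x[0]}\,\textbf{G}^{-1}(\cdot) + \cdots$ inside the outer decomposition. The base case $|T|=1$ gives $\textbf{A}_T = \textbf{A}_{x[0]}$, hence only the lone type~1 term, anchoring the induction.

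I expect the main obstacle to be the accounting of rounding error rather than the structural unwinding, which is routine. Distributing $\lfloor\cdot\rceil_p$ across the left-spine sum is valid only up to the small $\ell_\infty$ error already absorbed into $\textbf{E}$ in Observation~\ref{obs1}, so the decomposition is an equality of rounded terms modulo that bounded error and not on the nose; the delicate part is verifying that the accumulated error remains within the claimed bound. A secondary point I would flag is that the exact multiplicities depend on the shape of $T$: a purely right-leaning $T$ produces no type~2 summand at all, so the robust content of the statement is the exhaustive three-way classification of the summands, with the leftmost leaf always furnishing the unique type~1 term.
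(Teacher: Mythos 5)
Your unwinding of $\textbf{A}_T$ along the left spine is precisely the justification the paper itself relies on: the statement is given as an unproved Observation, and the same enumeration of the shapes of the $\textbf{G}^{-1}$ arguments appears in Observation~\ref{observ} inside the proof of Theorem~\ref{thm1}, so your argument is correct and follows the same route. Your two caveats are also accurate and worth recording --- the decomposition holds only up to the rounding discrepancy folded into $\textbf{E}$, and the ``at least one'' type-2 term is shape-dependent (a purely right-leaning $T$ with $|T| \geq 3$ produces none, and $|T|=1$ produces only the type-1 term), a slight overstatement in the paper that your reading of the observation as an exhaustive three-way classification of the summands repairs.
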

We prove that for appropriate parameters and input length, each one of the aforementioned terms is a pseudorandom function on its own.
\begin{lemma}\label{lemma1}
	Let $n$ be a positive integer, $q \geq 2$ be the modulus, $\chi$ be a probability distribution over $\mathbb{Z}$, and $m$ be polynomially bounded (i.e. $m = poly(n))$. For a uniformly random and independent choice of $\textbf{A}_0, \textbf{A}_1 \in \mathbb{Z}^{n \times nd}_q$ and a random seed vector $\textbf{S} \in \mathbb{Z}^{n \times nd}_q$, the function family $\lfloor \textbf{S}^T \cdot \textbf{A}_{x[i]} \rceil_p$ for the single bit input $x[i]~(1 \leq i \leq 2|T|)$ is a secure PRF family under the decision-LWE${}_{n,q,\chi}$ assumption. 	
\end{lemma}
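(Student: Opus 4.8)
The plan is to reduce the pseudorandomness of the single-bit family $\lfloor \textbf{S}^T \cdot \textbf{A}_{x[i]} \rceil_p$ directly to the hardness of decision-LWR, which in turn rests on decision-LWE by the theorem of Banerjee et al.\ recalled earlier. First I would observe that fixing a single input bit $x[i] \in \{0,1\}$ selects one of the two public base matrices $\textbf{A}_0, \textbf{A}_1$, so the object under study is, for each fixed bit value, simply $\lfloor \textbf{S}^T \textbf{A}_b \rceil_p$ with $b \in \{0,1\}$. The key structural point, already set up in the paper's ``Reverse-LWR/Reverse-LWE'' terminology, is that because $\textbf{S}$ and $\textbf{A}_b$ both live in $\mathbb{Z}^{n \times nd}_q$ and are sampled uniformly and independently, the roles of secret and public sample are interchangeable: $H(\textbf{A}_b) = H(\textbf{S})$, so rounding $\textbf{S}^T \textbf{A}_b$ with secret $\textbf{S}$ and public $\textbf{A}_b$ is distributionally equivalent to an ordinary LWR sample. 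I would make this precise by treating $\textbf{A}_b$ as the (public) coefficient matrix and $\textbf{S}$ as the LWR secret, matching the LWR distribution $L_{\textbf{S}}$ from the definition in Section~\ref{LWR}.

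The main steps, in order, are as follows. I would first argue that a single column of $\textbf{A}_b$ together with the corresponding rounded output is exactly one LWR sample $(\textbf{a}, \lfloor \langle \textbf{a}, \textbf{s} \rangle \rceil_p)$ under an appropriate reading of matrix-vector products; since $m = \poly(n)$, the whole matrix output amounts to polynomially many such samples, and the matrix-versus-vector interchange does not affect hardness by~\cite{Pie[12]}. Next I would invoke the observation (item 3 in the terminology list) that reverse-LWR${}_{n,q,p}$ is at least as hard as decision-LWR${}_{n,q,p}$, so distinguishing $\lfloor \textbf{S}^T \textbf{A}_b \rceil_p$ from uniform over $\mathbb{Z}^{nd \times nd}_p$ would yield a decision-LWR distinguisher. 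Finally, chaining with the Banerjee et al.\ theorem that decision-LWR${}_{n,q,p}$ is at least as hard as decision-LWE${}_{n,q,\chi}$ (under the stated parameter regime), I conclude that any efficient adversary distinguishing the single-bit family from random breaks decision-LWE, contradicting the assumption.

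The step I expect to be the genuine obstacle is cleanly justifying the secret/sample interchange that underlies reverse-LWR. The informal appeal to $H(\textbf{A}) = H(\textbf{S})$ and ``elementary linear algebra'' hides a real subtlety: LWR is not additive-noise LWE, so one cannot simply transpose and rename, since the rounding operator $\lfloor \cdot \rceil_p$ is applied after the inner product and does not commute with the transpose in any obvious way. I would therefore need to exhibit an explicit, efficient, and distribution-preserving reduction that, given LWR samples $(\textbf{a}_i, b_i)$ with hidden secret $\textbf{s}$, reconstructs a sample of the form $\lfloor \textbf{S}^T \textbf{A} \rceil_p$ with the correct joint distribution — or conversely, embeds a reverse-LWR challenge into a standard-LWR distinguisher. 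Making this embedding tight (same parameters $n, q, p$, same noise distribution $\chi$) is where the care is required; the remaining reductions are then routine given the cited theorems.
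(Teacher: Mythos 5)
Your proof takes essentially the same route as the paper's, which is a one-line reduction: cite Banerjee et al.~\cite{Ban[12]} that $\lfloor \textbf{S}^T\cdot\textbf{A}_{x[i]}\rceil_p$ is pseudorandom under decision-LWR${}_{n,q,p}$, then chain with the LWR-to-LWE reduction. The obstacle you flag --- rigorously justifying the reverse-LWR secret/sample interchange --- does not actually arise for this lemma, since here $\textbf{S}$ is the secret and $\textbf{A}_0,\textbf{A}_1$ are the public coefficient matrices, so each column already yields a standard LWR sample from $L_{\textbf{S}}$ and no role reversal is needed (that interchange only becomes relevant later, in Corollary~\ref{corr3}).
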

\begin{proof}
	We know from~\cite{Ban[12]} that $\lfloor \textbf{S}^T \cdot \textbf{A}_{x[i]} \rceil_p$ is a secure PRF under the decision-LWR${}_{n,q,p}$ assumption, which is at least as hard as solving the decision-LWE${}_{n,q,\chi}$ problem. \qed
\end{proof}
\begin{cor}[To Theorem~\ref{BanThm}]\label{corr1}
	Let $n$ be a positive integer, $q \geq 2$ be the modulus, $\chi$ be a probability distribution over $\mathbb{Z}$, and $m = poly(n)$. For uniformly random and independent matrices $\textbf{A}_0, \textbf{A}_1 \in \mathbb{Z}^{n \times nd}_q$ with a random seed $\textbf{S} \in \mathbb{Z}^{n \times nd}_q$, the function $\lfloor \textbf{S}^T \cdot \textbf{A}_{x[0]} \cdot \textbf{G}^{-1}(\textbf{A}_{x[i]}) \rceil_p$ for the two bit input: $x[0] || x[i]$, is a secure PRF family, under the decision-LWE${}_{n,q,\chi}$ assumption.
\end{cor}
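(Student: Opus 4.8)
The plan is to recognize that the function $\lfloor \textbf{S}^T \cdot \textbf{A}_{x[0]} \cdot \textbf{G}^{-1}(\textbf{A}_{x[i]}) \rceil_p$ is, up to two inessential cosmetic changes, exactly the Banerjee--Peikert PRF of Theorem~\ref{BanThm} instantiated on the smallest nontrivial tree. Concretely, let $T'$ be the full binary tree with a single internal node and two leaves, the left leaf carrying the bit $x[0]$ and the right leaf carrying $x[i]$. Unwinding Equation~\ref{BanFunc} for $|T'| = 2$ gives $\textbf{D}_{T'}(x[0] || x[i]) = \textbf{D}_{x[0]} \cdot \textbf{G}^{-1}(\textbf{D}_{x[i]})$, so that $H_{\textbf{s}}(x[0] || x[i]) = \lfloor \textbf{s} \cdot \textbf{D}_{x[0]} \cdot \textbf{G}^{-1}(\textbf{D}_{x[i]}) \rceil_p$. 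Identifying $\textbf{D}_0 \mapsto \textbf{A}_0$ and $\textbf{D}_1 \mapsto \textbf{A}_1$, this matches the function in the statement, modulo (i) a vector versus matrix seed and (ii) an $l$- versus $d$-dimensional gadget decomposition.

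First I would fix the tree parameters. For $T'$ one computes $e(T') = \max\{e(T'.\ell)+1, e(T'.r)\} = \max\{1,0\} = 1$ and $s(T') = 1$, with $|T'| = 2$. The modulus condition of Theorem~\ref{BanThm}, namely $q \geq p \cdot r\sqrt{|T'|} \cdot (nl)^{e(T')} \cdot \lambda^{\omega(1)} = p \cdot r\sqrt{2} \cdot (nl) \cdot \lambda^{\omega(1)}$, is then satisfied by the parameter choice of Equation~\ref{Eqn5.3}, which fixes $n$ and $\log q$ in terms of $e(T)$ for the full tree $T$, and a fortiori for the depth-one subtree $T'$. Hence Theorem~\ref{BanThm} applies directly to the vector-seed version $\lfloor \textbf{s} \cdot \textbf{A}_{x[0]} \cdot \textbf{G}^{-1}(\textbf{A}_{x[i]}) \rceil_p$, which is therefore a secure PRF on two-bit inputs under decision-LWE${}_{n,q,\chi}$.

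It remains to discharge the two cosmetic differences, and this is where I expect the only real care to be needed. For the seed, each of the $nd$ rows of $\textbf{S}^T$ is an independent uniform vector in $\mathbb{Z}^n_q$ (since $\textbf{S}$ is uniform), so the matrix output stacks $nd$ parallel Banerjee--Peikert evaluations sharing the public matrices $\textbf{A}_0, \textbf{A}_1$; this is exactly the matrix-secret instantiation of LWE, whose hardness is unchanged relative to the vector case by the interchange result of~\cite{Pie[12]} already invoked in Section~\ref{sec4}. For the decomposition, recall from Section~\ref{Prop} that $\textbf{g}^{-1}$ merely prepends the fixed coordinate $x' = 0$ to the $l$-bit binary expansion, so that $\textbf{G}^{-1}$ still satisfies $\textbf{G} \cdot \textbf{G}^{-1}(\textbf{A}) = \textbf{A}$ and the added carry coordinate is a public, deterministic zero that leaves the output distribution and the entire reduction untouched, its sole role being to guarantee the linearity of Equation~\ref{linearEq} (the resulting replacement of $nl$ by $nd = n(l+1)$ in the modulus bound is a polynomial change absorbed by the $\lambda^{\omega(1)}$ slack). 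Combining these observations, the matrix-seed, $d$-dimensional function of the statement inherits pseudorandomness from the vector-seed, $l$-dimensional Banerjee--Peikert PRF; the main obstacle is thus purely the bookkeeping that both modifications are security-preserving, and each is resolved by a fact already established in the excerpt, so no new hardness assumption is introduced.
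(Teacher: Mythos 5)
Your proposal is correct and follows essentially the same route as the paper's own proof, which simply observes that $\lfloor \textbf{S}^T \cdot \textbf{A}_{x[0]} \cdot \textbf{G}^{-1}(\textbf{A}_{x[i]}) \rceil_p$ is an instance of the Banerjee--Peikert family $\mathcal{H}_{\textbf{A}_0, \textbf{A}_1, T, p}$ on a two-leaf tree and invokes Theorem~\ref{BanThm}. You additionally spell out the bookkeeping (the parameters $e(T')=1$, the matrix-versus-vector seed via~\cite{Pie[12]}, and the $d$-versus-$l$ gadget dimension) that the paper leaves implicit, which only strengthens the argument.
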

\begin{proof}
	For the two bit input $x[0] || x[i]$, the expression $\lfloor \textbf{S}^T \cdot \textbf{A}_{x[0]} \cdot \textbf{G}^{-1}(\textbf{A}_{x[i]}) \rceil_p$ is an instance of the function $\mathcal{H}_{\textbf{A}_0, \textbf{A}_1, T, p}$ (see Section~\ref{BanSec}). Hence, it follows from Theorem~\ref{BanThm} that $\lfloor \textbf{S}^T \cdot \textbf{A}_{x[0]} \cdot \textbf{G}^{-1}(\textbf{A}_{x[i]}) \rceil_p$ is a secure PRF family under the decision-LWE${}_{n,q,\chi}$ assumption. \qed
\end{proof}
\begin{cor}[To Theorem~\ref{BanThm}]\label{corr2}
	 Let $n$ be a positive integer, $q \geq 2$ be the modulus, $\chi$ be a probability distribution over $\mathbb{Z}$, and $m = poly(n)$. Given uniformly random and independent $\textbf{A}_0, \textbf{A}_1 \in \mathbb{Z}^{n \times nd}_q$, and a random seed $\textbf{S} \in \mathbb{Z}^{n \times nd}_q$, the function: $\lfloor \textbf{S}^T \cdot \textbf{A}_{x[0]} \cdot \textbf{G}^{-1}(\textbf{A}_{x[i]} + \textbf{A}_{x[0]} \cdot \textbf{G}^{-1}(\cdot)+ \dots) \rceil_p$ is a secure PRF family under the decision-LWE${}_{n,q,\chi}$ assumption. 
\end{cor}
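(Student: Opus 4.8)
The plan is to reduce the general nested term to the two‑bit case already settled in Corollary~\ref{corr1}, and then to invoke Theorem~\ref{BanThm} after recognizing the expression as an instance of the Banerjee--Peikert family $\mathcal{H}_{\textbf{A}_0,\textbf{A}_1,T,p}$. I would proceed by induction on the nesting depth of $\textbf{G}^{-1}$, i.e.\ on the sequentiality $s(T)$ of the subtree along which the term unwinds. The base case, a single un‑nested $\textbf{G}^{-1}$ applied to a base matrix $\textbf{A}_{x[i]}$, is exactly Corollary~\ref{corr1}, while the single‑leaf case reduces to Lemma~\ref{lemma1}. For the inductive step I would write the argument of the outer $\textbf{G}^{-1}$ as $\textbf{M} = \textbf{A}_{x[i]} + \textbf{A}_{x[0]}\cdot\textbf{G}^{-1}(\cdot) + \dots$, where each summand $\textbf{A}_{x[0]}\cdot\textbf{G}^{-1}(\cdot)$ is itself a nested term of strictly smaller depth, and argue that once the argument is pseudorandom the outer step inherits PRF security.

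The key algebraic tool is the linearity of $\textbf{G}^{-1}$ established in Equation~\ref{linearEq}, which holds here thanks to the extra carry bit ($d=l+1$). Using it, I would push $\textbf{G}^{-1}$ through the sum defining $\textbf{M}$, so that $\textbf{S}^T\cdot\textbf{A}_{x[0]}\cdot\textbf{G}^{-1}(\textbf{M})$ becomes, before rounding, a single matrix product indexed by a full binary tree $T$ whose shape is dictated by the pattern of nesting and summation; identifying the base matrices as $\textbf{D}_0=\textbf{A}_0$ and $\textbf{D}_1=\textbf{A}_1$, this is precisely $\textbf{S}^T\cdot\textbf{D}_T(x)$ from Section~\ref{BanSec}. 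Since the term involves only the public matrices $\textbf{A}_0,\textbf{A}_1$ and the secret seed $\textbf{S}$, and no seed‑dependent $\textbf{B}^{\textbf{S}}$ matrices, the setup matches the standard orientation of $\mathcal{H}$, so the reverse‑LWE/LWR machinery is not needed here. Because our parameters $q,p,n,r,\chi$ meet the bounds of Theorem~\ref{BanThm} and Equation~\ref{Eqn5.3}, I would then conclude that $\lfloor \textbf{S}^T\cdot\textbf{A}_{x[0]}\cdot\textbf{G}^{-1}(\textbf{M})\rceil_p$ is a secure PRF family under the decision-LWE${}_{n,q,\chi}$ assumption.

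The hard part will be justifying that the additive structure of $\textbf{M}$, which arises from the additive recursion of $\textbf{A}_T$ and is absent from the purely multiplicative recursion of $\textbf{D}_T$ in~\cite{Ban[14]}, really collapses to a genuine $\mathcal{H}$‑instance rather than merely to a sum of such instances. Distributing $\textbf{G}^{-1}$ over the sum yields summands that all reuse the same two base matrices $\textbf{A}_0,\textbf{A}_1$ and are therefore correlated, so one cannot treat them as independent PRFs and simply add them. I would instead carry the induction through a level‑by‑level hybrid argument in the style of the Banerjee--Peikert proof: assuming the depth‑$(s-1)$ subterms are pseudorandom, replace the corresponding intermediate matrix by a uniform one using decision-LWE (equivalently decision-LWR), so that $\textbf{M}$ becomes indistinguishable from a fresh uniform matrix; the outer step $\textbf{S}^T\cdot\textbf{A}_{x[0]}\cdot\textbf{G}^{-1}(\cdot)$ followed by rounding is then exactly the two‑bit instance of Corollary~\ref{corr1} applied to that uniform matrix. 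Controlling the accumulated rounding error $\textbf{E}$ and the advantage loss across the $s(T)$ hybrids, while respecting that the additive terms share base matrices, is the delicate point; the low‑norm guarantee on $\textbf{G}^{-1}(\cdot)$ and the parameter choices of Equation~\ref{Eqn5.3} are what keep both under control.
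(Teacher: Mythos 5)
Your proposal founders on the inductive step. The matrix $\textbf{M} = \textbf{A}_{x[i]} + \textbf{A}_{x[0]}\cdot\textbf{G}^{-1}(\cdot) + \dots$ sitting inside the outer $\textbf{G}^{-1}$ contains no secret whatsoever: it is a deterministic function of the public matrices $\textbf{A}_0, \textbf{A}_1$ and the input bits, which any adversary can recompute exactly from the public parameters. Consequently no invocation of decision-LWE or decision-LWR can render $\textbf{M}$ ``indistinguishable from a fresh uniform matrix''; there is no hidden seed to reduce from, and the hybrid you describe is trivially distinguishable. The pseudorandomness of the term has to be extracted from the left factor $\textbf{S}^T\cdot\textbf{A}_{x[0]}$ (an LWR-type sample in the secret $\textbf{S}$) being multiplied by the \emph{short, public} matrix $\textbf{G}^{-1}(\textbf{M})$, exactly as in the Banerjee--Peikert analysis --- not from the inner argument itself becoming pseudorandom. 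Relatedly, distributing $\textbf{G}^{-1}$ over the sum via Equation~\ref{linearEq} does not produce ``precisely $\textbf{S}^T\cdot\textbf{D}_T(x)$'': the recursion defining $\textbf{D}_T$ in Section~\ref{BanSec} is purely multiplicative, so what you obtain is a sum of correlated $\mathcal{H}$-type terms --- a difficulty you correctly flag, but then resolve with the unsound hybrid.

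For comparison, the paper's own proof takes a much shorter route: it observes that, since $\textbf{A}_0,\textbf{A}_1$ are uniform and independent, the nested argument $\textbf{A}_{x[i]} + \textbf{A}_{x[0]}\cdot\textbf{G}^{-1}(\cdot)+\dots$ is statistically indistinguishable from the matrix arising in the $\textbf{D}_T$ recursion on a ``right spine'' full binary tree, and then applies Theorem~\ref{BanThm} to that tree in a single step, with no induction, no hybrids, and no appeal to the linearity of $\textbf{G}^{-1}$. A rigorous version of your plan would have to adapt the Banerjee--Peikert induction to the additive recursion of $\textbf{A}_T$ while keeping the secret on the outside at every level; as written, your level-by-level replacement of $\textbf{M}$ by a uniform matrix is the step that fails.
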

\begin{proof}
	Since, $\textbf{A}_0, \textbf{A}_1 \in \mathbb{Z}^{n \times nd}_q$ are random and independent, $\textbf{A}_{x[0]} \cdot \textbf{G}^{-1}(\cdot)$ is statistically indistinguishable from $\textbf{A}_{x[i]} \cdot \textbf{G}^{-1}(\cdot)$, as defined by the function $\textbf{B}_T(x)$ (see Equation~\ref{BanFunc}), where $\textbf{G}^{-1}(\cdot)$ represents possibly nested $\textbf{G}^{-1}$. Hence, it follows from Theorem~\ref{BanThm} that for the ``right spine'' (with leaves for all the left children) full binary tree $T, \lfloor \textbf{S}^T \cdot \textbf{A}_{x[0]} \cdot \textbf{G}^{-1}(\textbf{A}_{x[i]} + \textbf{A}_{x[0]} \cdot \textbf{G}^{-1}(\cdot) + \dots) \rceil_p$ defines a secure PRF family under the decision-LWE${}_{n,q,\chi}$ assumption. \qed
\end{proof}
\begin{cor}[To Theorem~\ref{BanThm}]\label{corr3}
	Let $n$ be a positive integer, $q \geq 2$ be the modulus, $\chi$ be a probability distribution over $\mathbb{Z}$, and $m = poly(n)$. For uniformly random and independent matrices $\textbf{A}_0, \textbf{A}_1 \in \mathbb{Z}^{n \times nd}_q$, and a random seed $\textbf{S} \in \mathbb{Z}^{n \times nd}_q$, the function $\textbf{R}_T(x_{rh}) = \lfloor R_0(x_{\ell h}) \cdot \textbf{G}^{-1}(\textbf{B}^{\textbf{S}}_{T}(x_{rh})) \rceil_p$ is a secure PRF family under the decision-LWE${}_{n,q,\chi}$ assumption.
\end{cor}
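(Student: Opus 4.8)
The plan is to reduce the pseudorandomness of $\textbf{R}_T$ to Theorem~\ref{BanThm}, following the same template as Corollaries~\ref{corr1} and~\ref{corr2}, but first exploiting the pseudorandomness of the generator $R$ and then reading the resulting expression as a \emph{reverse}-LWR instance. The key structural observation is that in $\textbf{R}_T(x_{rh})=\lfloor R_0(x_{\ell h})\cdot \textbf{G}^{-1}(\textbf{B}^{\textbf{S}}_{T}(x_{rh}))\rceil_p$ the matrix $R_0(x_{\ell h})$ sits in the position of a public seed, while the genuine LWE secret $\textbf{S}$ is buried inside the seed-dependent base matrices $\textbf{B}_b=\textbf{A}_b+\textbf{S}$. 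This is exactly the configuration $\lfloor \textbf{S}^{T}\textbf{A}\rceil_p$ with public seed and secret basis that we have already noted is at least as hard to distinguish as decision-LWR${}_{n,q,p}$, hence as decision-LWE${}_{n,q,\chi}$.

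First I would remove the generator: since $R$ is a pseudorandom generator, replacing each value $R(x_{\ell h})$ by a fresh, independent uniform matrix costs only negligible distinguishing advantage, after which $R_0(x_{\ell h})=R(x_{\ell h})\cdot \textbf{A}_{x[0]}$ is a uniform public seed. Next I would unwind $\textbf{B}^{\textbf{S}}_{T}(x_{rh})$: its recursion has the same additive-multiplicative shape as $\textbf{A}_T$, differing only in carrying the secret at the leaves through $\textbf{B}_b=\textbf{A}_b+\textbf{S}$. Pushing the outer $\textbf{G}^{-1}$ through exactly as in Observation~\ref{obs2}, every $\textbf{G}^{-1}$ appearing in $\textbf{R}_T$ receives at most the three argument shapes listed there, so $\textbf{R}_T(x_{rh})$ splits into a sum of terms of precisely the forms handled by Lemma~\ref{lemma1}, Corollary~\ref{corr1} and Corollary~\ref{corr2}, now with $\textbf{S}^{T}$ replaced by the public seed $R_0(x_{\ell h})$ and the base matrices $\textbf{A}_b$ augmented by $\textbf{S}$ at the leaves. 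Reading the hidden $\textbf{S}$ as the secret lattice basis puts each such term in the reverse-LWR form, so each is individually pseudorandom by Theorem~\ref{BanThm}, and the sum of pseudorandom terms remains pseudorandom exactly as in Observation~\ref{obs1}.

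The step I expect to be the main obstacle is not the tree unwinding but justifying a secret that is simultaneously input-dependent and only pseudorandom: Theorem~\ref{BanThm} is stated for one fixed, uniformly random secret, whereas here each left half $x_{\ell h}$ induces its own seed $R_0(x_{\ell h})$ while all of them share the public matrices $\textbf{A}_0,\textbf{A}_1,\textbf{B}_0,\textbf{B}_1$. The reverse-LWR reading is what resolves this cleanly: once the seeds are made uniform by the $R$-hop, the shared secret is the lattice basis $(\textbf{B}_0,\textbf{B}_1)$ and the distinct seeds $R_0(x_{\ell h})$ are merely many query matrices against that single secret, so the whole family reduces to the many-sample reverse-LWR${}_{n,q,p}$ problem rather than to multiple independent instances. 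Chaining the generator hop, the uniformity of $(\textbf{B}_0,\textbf{B}_1)$, and this single-secret/many-seed reduction then yields that $\textbf{R}_T$ is a secure PRF family under decision-LWE${}_{n,q,\chi}$.
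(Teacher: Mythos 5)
Your proposal follows essentially the same route as the paper's proof: both observe that $\textbf{B}_0,\textbf{B}_1$ inherit the uniform distribution of $\textbf{A}_0,\textbf{A}_1$, invoke the PRG property of $R$ so that $R_0(x_{\ell h})$ serves as a valid seed, read $\textbf{R}_T(x_{rh})$ as a reverse-LWR${}_{n,q,p}$ instance (at least as hard as decision-LWR, hence decision-LWE), and conclude via Theorem~\ref{BanThm}. Your added discussion of the input-dependent seeds all querying a single secret basis $(\textbf{B}_0,\textbf{B}_1)$ is a finer point that the paper's one-paragraph proof glosses over, but it does not change the structure of the argument.
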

\begin{proof}
	We know that $\textbf{A}_0, \textbf{A}_1, \textbf{S} \in \mathbb{Z}^{n \times nd}_q$ are generated uniformly and independently. Therefore, the secret matrices, $\textbf{B}_0, \textbf{B}_1$, defined as: $\textbf{B}_0 = \textbf{A}_0 + \textbf{S}$ and $\textbf{B}_1 = \textbf{A}_{1} + \textbf{S},$ have the same distribution as $\textbf{A}_0, \textbf{A}_1$. As $R: \{0,1\}^{|T|} \rightarrow \mathbb{Z}^{nd \times n}_q$ is a PRG, $R(x_{\ell h})$ is a valid seed matrix for decision-LWE, making $\textbf{R}_T(x_{rh})$ an instance of reverse-LWR${}_{n,q,p}$, which we know is as hard as the decision-LWR${}_{n,q,p}$ problem. Hence, it follows from Theorem~\ref{BanThm} that $\textbf{R}_T(x_{rh})$ defines a secure PRF family for secret $R(x_{\ell h})$. \qed
\end{proof}
\begin{theorem}\label{mainThm}
	Let $T$ be any full binary tree, $\chi$ be some distribution over $\mathbb{Z}$ that is subgaussian with parameter $r > 0$ (e.g., a bounded or discrete Gaussian distribution with expectation zero), $R: \{0,1\}^{|T|} \rightarrow \mathbb{Z}^{nd \times n}_q$ be a PRG, and
	\[q \geq p \cdot r \sqrt{|T|} \cdot (nd)^{e(T)} \cdot \lambda^{\omega(1)}.\]
	Then over the uniformly random and independent choice of $\textbf{A}_0, \textbf{A}_1 \in \mathbb{Z}^{n \times nd}_q$ and a random seed $\textbf{S} \in \mathbb{Z}^{n \times nd}_q$, the family $\mathcal{F}_{(\textbf{A}_0,\textbf{A}_1,T,p)}$ is a secure PRF under the decision-LWE${}_{n,q,\chi}$ assumption.
\end{theorem}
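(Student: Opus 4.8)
The plan is to prove Theorem~\ref{mainThm} by assembling the pseudorandomness of the constituent pieces established in Lemma~\ref{lemma1} and Corollaries~\ref{corr1}--\ref{corr3} through a hybrid argument. First I would invoke Observation~\ref{obs1} to write, up to a bounded rounding term $\textbf{E}$,
\[
F_{\textbf{S}}(x) = \textbf{L}_T(x_{\ell h}) + \textbf{R}_T(x_{rh}) + \textbf{E},
\]
so that each output of $F_{\textbf{S}}$ is (almost) the sum of a contribution $\textbf{L}_T$ depending on the left half of the input and a contribution $\textbf{R}_T$ depending on the right half (through $\textbf{B}^{\textbf{S}}_T(x_{rh})$) and on the left half (through the seed $R_0(x_{\ell h})$). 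The goal then reduces to showing that this sum is indistinguishable from a uniformly random function on $\{0,1\}^{2|T|}$, and hence that $\mathcal{F}_{(\textbf{A}_0,\textbf{A}_1,T,p)}$ is a secure PRF under decision-LWE${}_{n,q,\chi}$.

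For the left summand, I would unwind the additive recursion of $\textbf{A}_T$ exactly as in Observation~\ref{obs2}, exhibiting $\textbf{L}_T(x_{\ell h})$ as a sum of the three term-types treated in Lemma~\ref{lemma1} (the single-leaf term), Corollary~\ref{corr1} (the depth-one $\textbf{G}^{-1}$ term), and Corollary~\ref{corr2} (the nested $\textbf{G}^{-1}$ terms). Since each of these is a secure PRF under decision-LWE${}_{n,q,\chi}$, and since they are precisely the summands produced by the Banerjee--Peikert tree function with $nl$ replaced by $nd$, I would appeal to Theorem~\ref{BanThm} to conclude that $\textbf{L}_T$ is itself pseudorandom in $x_{\ell h}$; this is the step where the stated modulus bound $q \geq p \cdot r \sqrt{|T|} \cdot (nd)^{e(T)} \cdot \lambda^{\omega(1)}$ is needed, matching the bound of Theorem~\ref{BanThm}. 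For the right summand, Corollary~\ref{corr3} already gives that $\textbf{R}_T(x_{rh})$ is a secure PRF, using that $R(x_{\ell h})$ is a valid pseudorandom seed and that reverse-LWR is as hard as decision-LWR. A two-step hybrid then replaces first $\textbf{L}_T$ and then $\textbf{R}_T$ by independent random functions; in the final hybrid the output is the rounded sum of two independent uniform matrices plus $\textbf{E}$, which is statistically uniform, so the bounded error $\|\textbf{E}\|_\infty \leq 1$ is washed out in the end.

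The main obstacle I anticipate is that the summands are \emph{not} independent: every term comprising $\textbf{L}_T$, together with $\textbf{R}_T$, is built from the same seed $\textbf{S}$ and the same base matrices $\textbf{A}_0,\textbf{A}_1$, so one cannot simply add up advantage bounds for independently keyed PRFs. The resolution is to run the hybrid at the level of the single decision-LWE (equivalently decision-LWR) challenge that the Banerjee--Peikert reduction embeds: one challenge instance simultaneously randomizes all the correlated $\textbf{G}^{-1}$-terms, so that their \emph{joint} pseudorandomness, not merely their marginal pseudorandomness, follows from a single invocation of the assumption. To decouple the right part across distinct left halves, I would additionally use security of the PRG $R$ to replace the seeds $R(x_{\ell h})$ by fresh independent random seeds, after which $\textbf{R}_T$ behaves as a per-$x_{\ell h}$ fresh PRF in $x_{rh}$. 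Once these correlations are handled and the super-polynomial slack $\lambda^{\omega(1)}$ in $q$ is used to absorb the accumulated rounding error, the chain of hybrids collapses the distinguishing advantage for $\mathcal{F}$ to that for decision-LWE${}_{n,q,\chi}$, completing the proof.
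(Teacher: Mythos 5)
Your proposal follows essentially the same route as the paper's own proof: decompose $F_{\textbf{S}}$ via Observations~\ref{obs1} and~\ref{obs2} into the summands $\lfloor \textbf{S}^T\cdot\textbf{A}_{x[0]}\rceil_p$, the depth-one and nested $\textbf{G}^{-1}$ terms, and $\textbf{R}_T(x_{rh})$, invoke Lemma~\ref{lemma1} and Corollaries~\ref{corr1}--\ref{corr3} for each summand, and conclude pseudorandomness of the sum under decision-LWE${}_{n,q,\chi}$. The only divergence is that you explicitly flag and propose to repair the correlation among summands sharing the seed $\textbf{S}$ and the matrices $\textbf{A}_0,\textbf{A}_1$ (via a single embedded LWE challenge and the PRG security of $R$), a point the paper's proof passes over in silence, so your treatment is, if anything, more careful than the original.
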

\begin{proof}
	From Observations~\ref{obs1} and~\ref{obs2}, we know that each member $F_\textbf{S}$ of the function family $\mathcal{F}_{(\textbf{A}_0,\textbf{A}_1,T,p)}$ is defined by the random seed $\textbf{S}$, and can be written as:
	\begin{flalign*}
		F_\textbf{S}(x) &= \lfloor \textbf{S}^T \cdot \textbf{A}_{x[0]} \rceil_p + d_1 \cdot \lfloor \textbf{S}^T \cdot \textbf{A}_{x[0]} \cdot \textbf{G}^{-1}(\textbf{A}_{x[i]}) \rceil_p + \\ & d_2 \cdot \lfloor \textbf{S}^T \cdot \textbf{A}_{x[0]} \cdot \textbf{G}^{-1}(\textbf{A}_{x[i]} + \textbf{A}_{x[0]} \cdot \textbf{G}^{-1}(\cdot) + \dots) \rceil_p + \\ & d_3 \cdot \lfloor R(x_{\ell h}) \cdot \textbf{A}_{x[0]} \cdot \textbf{G}^{-1}(\textbf{B}^{\textbf{S}}_{T}(x_{rh})) \rceil_p + \textbf{E},
	\end{flalign*}
	where $d_1, d_2, d_3 \in \mathbb{Z}$, such that, $1 \leq d_1, d_3 \leq |T|$ and $0 \leq d_2 \leq |T|$. From Lemma~\ref{lemma1}, and Corollaries~\ref{corr1},~\ref{corr2} and~\ref{corr3}, we know that the following are secure PRFs under the decision-LWE${}_{n,q,\chi}$ assumption:
	\begin{enumerate}
		\item $\lfloor \textbf{S}^T \cdot \textbf{A}_{x[0]} \rceil_p$, $\lfloor \textbf{S}^T \cdot \textbf{A}_{x[0]} \cdot \textbf{G}^{-1}(\textbf{A}_{x[i]}) \rceil_p$,
		\item $\lfloor \textbf{S}^T \cdot \textbf{A}_{x[0]} \cdot \textbf{G}^{-1}(\textbf{A}_{x[i]} + \textbf{A}_{x[0]} \cdot \textbf{G}^{-1}(\cdot) + \dots) \rceil_p$,
		\item $\lfloor R(x_{\ell h}) \cdot \textbf{A}_{x[0]} \cdot \textbf{G}^{-1}(\textbf{B}_{T}^{\textbf{S}}(x_{rh})) \rceil_p$.
	\end{enumerate} 
	Hence, it follows that the function family $\mathcal{F}_{(\textbf{A}_0,\textbf{A}_1,T,p)}$ is a secure PRF under the decision-LWE${}_{n,q,\chi}$ assumption.    
\end{proof}

\begin{cor}[To Theorem~\ref{mainThm}]\label{PRFcor}
	Let $T$ be any full binary tree, $\chi$ be some distribution over $\mathbb{Z}$ that is subgaussian with parameter $r > 0$ (e.g., a bounded or discrete Gaussian distribution with expectation zero), $R: \{0,1\}^{|T|} \rightarrow \mathbb{Z}^{nd \times n}_q$ be a PRG, and
	\[q \geq p \cdot r \sqrt{|T|} \cdot (nd)^{e(T)} \cdot \lambda^{\omega(1)}.\]
	Then over the uniformly random and independent choice of $\textbf{A}_0, \textbf{A}_1 \in \mathbb{Z}^{n \times nd}_q$ and random seed $\textbf{S} \in \mathbb{Z}^{n \times nd}_q$, the family $\mathcal{F'}_{(\mathbb{A},T,p)}$ is a secure PRF under the decision-LWE${}_{n,q,\chi}$ assumption.
\end{cor}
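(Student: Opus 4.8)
The plan is to mirror the proof of Theorem~\ref{mainThm} almost verbatim, exploiting the fact that $F'_\textbf{S}(z_0,z_1)$ from Equation~\ref{deffunc} has exactly the same shape as $F_\textbf{S}(x)$ from Equation~\ref{deffunc1}: a rounded sum of a ``left'' term $\textbf{S}^T \cdot \textbf{A}_T(z_0)$ and a ``right'' term $R_0(z_0) \cdot \textbf{G}^{-1}(\textbf{C}^\textbf{S}_T(z_1))$, the only differences being that $\textbf{C}^\textbf{S}_T$ replaces $\textbf{B}^\textbf{S}_T$ and that the second argument ranges over the enlarged alphabet $\{0,1,\bar{0}\}^{|T|}$. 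First I would invoke Observations~\ref{obs1} and~\ref{obs2}, whose statements depend only on the function $\textbf{A}_T$. Since $\textbf{A}_T(z_0)$ is defined identically for both families, the left term splits into exactly the three families of summands already handled by Lemma~\ref{lemma1} and Corollaries~\ref{corr1} and~\ref{corr2}, so each is a secure PRF under the decision-LWE${}_{n,q,\chi}$ assumption with no change to the argument.

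The crux is the right term. Here I would first record that $\textbf{C}_0 = \textbf{A}_1 + \textbf{B}_1$, $\textbf{C}_1 = \textbf{A}_0 + \textbf{B}_1$ and $\overline{\textbf{C}}_0 = \textbf{A}_0 + \textbf{B}_0$ each carry the uniform, independent seed $\textbf{S}$ as an additive mask, and are therefore marginally uniform over $\mathbb{Z}^{n \times nd}_q$, exactly as $\textbf{B}_0, \textbf{B}_1$ are in Corollary~\ref{corr3}. Consequently $\textbf{C}^\textbf{S}_T$ shares the recursive and distributional structure of $\textbf{B}^\textbf{S}_T$, and the argument of Corollary~\ref{corr3} transfers verbatim: treating $R(z_0)$ as a pseudorandom seed matrix, the term $\lfloor R_0(z_0) \cdot \textbf{G}^{-1}(\textbf{C}^\textbf{S}_T(z_1)) \rceil_p$ is an instance of reverse-LWR${}_{n,q,p}$ and hence a secure PRF under decision-LWE${}_{n,q,\chi}$. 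Combining the left and right terms as in the concluding step of Theorem~\ref{mainThm} gives the claim, and the modulus bound is identical because the expansion $e(T)$ is a property of the same underlying tree $T$.

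The step I expect to require the most care is verifying that the enlarged alphabet, together with the special branch of $\textbf{C}^\textbf{S}_T$ that outputs $\overline{\textbf{C}}_0$ for a two-bit leaf whenever $x[i]=x[i+1]=0$, does not introduce any summand-type in the recursive unwinding that falls outside the three forms enumerated in Observation~\ref{obs2}. The key point is that $\overline{\textbf{C}}_0$ merely serves as an additional uniform base matrix sitting at a (possibly two-bit) leaf; it alters neither the number of consecutively added $\textbf{G}^{-1}(\cdot)$ terms nor the depth of nesting, so $e(T)$ and $s(T)$ are unchanged and every summand remains of a form already proved pseudorandom. Once this is confirmed, the corollary follows by the same synthesis of Lemma~\ref{lemma1} and Corollaries~\ref{corr1}--\ref{corr3} that established security of $\mathcal{F}_{(\textbf{A}_0,\textbf{A}_1,T,p)}$.
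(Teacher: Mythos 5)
The paper states this corollary without giving any explicit proof, implicitly treating it as immediate from Theorem~\ref{mainThm}; your argument is precisely the one the paper intends, reusing the same left/right decomposition and observing that $\textbf{C}_0, \textbf{C}_1, \overline{\textbf{C}}_0$ are uniform because each carries the seed $\textbf{S}$ as an additive mask, exactly as $\textbf{B}_0, \textbf{B}_1$ do in Corollary~\ref{corr3}. Your treatment is correct at the same level of rigor as the paper's own proofs, and your explicit check that the $\overline{\textbf{C}}_0$ leaf branch leaves $e(T)$, $s(T)$ and the summand types of Observation~\ref{obs2} unchanged is a detail the paper omits entirely.
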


\section{Time Complexity Analysis}\label{sec7}
In this section, we analyze the asymptotic time complexity of evaluating a function from our HVL-KIH-PRF family. We know that the time complexity of the binary decomposition function $\textbf{g}^{-1}$ is $O(\log q)$, and that $\textbf{G}^{-1}$ is simply $\textbf{g}^{-1}$ applied entry-wise. The size of the public matrices, $\textbf{A}_0, \textbf{A}_1 \in \mathbb{Z}^{n \times nd}_q$, is $\Theta(n^2 \log q)$, which by Equation~\ref{Eqn5.3} is $e(T)^4 \cdot \tilde{\Theta}(\lambda^2)$ bits. The secret matrix, $\textbf{S} \in \mathbb{Z}^{n \times nd}_q$, also has the same size, i.e., $e(T)^4 \cdot \tilde{\Theta}(\lambda^2)$. Computing $\textbf{A}_T(x), \textbf{B}_T(x)$ or $\textbf{C}_T(x)$ requires one decomposition with $\textbf{G}^{-1}$, one $(n \times nd)$-by-$(nd \times nd)$ matrix multiplication and one $(n \times nd)$-by-$(n \times nd)$ matrix addition over $\mathbb{Z}_q$, per internal node of $T$. Hence, the total time complexity comes out to be $\Omega(|T| \cdot n^\omega \log^2 \, q)$ operations in $\mathbb{Z}_q$, where $\omega \geq 2$ is the exponent of matrix multiplication. 

\section{Left/Right HVL-KH-CPRFs}\label{sec8}
In this section, we present the construction of another novel PRF class, namely left/right HVL-KH-CPRFs, as a special case of our HVL-KIH-PRF family. Let $F': \mathcal{K} \times \mathcal{X} \times \mathcal{Y} \rightarrow \mathcal{Z}$ be the PRF defined by Equation~\ref{deffunc}. The goal is to derive a \textit{constrained} key PRF, $k_{x,left}$ or $k_{x,right}$ for every $x \in \mathcal{X}$ and $k_0 \in \mathcal{K}$, such that $k_{x,left} = F_{k_0}(x ||\cdot)$ (where $F: \mathcal{K} \times \mathcal{X} \times \mathcal{X} \rightarrow \mathcal{Z}$ is the PRF family defined by Equation~\ref{deffunc1}) enables the evaluation of the PRF function $F'_{k}(x,y)$ for the key $k = k_0 + k_1$, where $k_1 \in \mathcal{K}$, and the subset of points $\{(x,y): y \in \mathcal{Y}\}$, i.e., all the points where the left portion of the input is $x$. Similarly, the constrained key $k_{x, right} = F_{k_0}(\cdot || x)$ enables the evaluation of the PRF function $F'_{k}(x,y)$ for the key $k = k_0 + k_1$, where $k_1 \in \mathcal{K}$, and the subset of points $\{(y,x): y \in \mathcal{Y}\}$, i.e., all the points where the right side of the input is $x$.\\[2mm]
\textbf{KH-CPRF Construction.} We begin by giving a construction for left KH-CPRF, without HVL, and then turn it into a HVL-KH-CPRF construction. Our HVL-KIH-PRF function, defined in Equation~\ref{deffunc1}, is itself a left KH-CPRF when evaluated as: $F_{k_0}(x_0 || \textbf{1})$, i.e., the key is $k_0 \in \mathcal{K}$, the left side of the input is $x_0 \in \mathcal{X}$, and the right half is an all one vector, $\textbf{1} = \{1\}^{\log |\mathcal{X}|}$. Now, to evaluate $F'_{k}(x_0, x_1)$ at a key $k = k_0 + k_1$, and any right input $x_1 \in \mathcal{Y}$, first evaluate $F_{k_1}(x_0 || x_1')$ and add its output with that of the given constrained function, $F_{k_0}(x_0 || \textbf{1})$, i.e., compute: $F'_{k}(x_0, x_1) = F_{k_1}(x_0 || x_1') + F_{k_0}(x_0 || \textbf{1}),$ where $x'_1 \in \mathcal{X}$, and $x_1 = x_1' \bar{\oplus} \textbf{1}$, with $k = k_0 + k_1$. Recall from Table~\ref{tab1} that `almost XOR', $\bar{\oplus}$, differs from XOR only for the case when both inputs are zero. Hence, having $\textbf{1}$ as the right half effectively turns $\bar{\oplus}$ into $\oplus$, and ensures that all possible right halves $x_1 \in \mathcal{X}$ can be realized via $x'_1 \in \mathcal{X}$. 

Similarly, right KH-CPRF can be realized by provisioning the constrained function $F_{k_0}(\textbf{1} || x_0)$, where $\textbf{1} = \{1\}^{|\log \mathcal{X}|}$ is an all ones vector and $x_0 \in \mathcal{X}$. This interchange allows one to evaluate a different version of our HVL-KIH-PRF function, where the left portion of the input exhibits homomorphism (see Section~\ref{sec4}). Hence, for all $k_1 \in \mathcal{K}$ and $x_1 \in \mathcal{X}$, it supports evaluation of $F'_{k_0+k_1}(x_1 || x_0)$. \\[1.5mm]
\textbf{Achieving HVL.} Wlog, we demonstrate left In order to achieve HVL for the left/right KH-CPRF given above, simply replace the all ones vector, $\textbf{1}$, with an all zeros vector, $\textbf{0}$, of the same dimension. Hence, the new constrained functions become: $k_{left} = F_{k_0}(x_0 || \textbf{0})$ and $k_{right}= F_{k_0}(\textbf{0} || x_0)$. This enables us to evaluate $F'_{k}(x_0, x_1) \in \mathcal{F'}$ at key $k = k_0 + k_1$, for any $k_1 \in \mathcal{K}$ and input $x_1 \in \mathcal{Y}$ as the input homomorphism of the PRF family $F \in \mathcal{F}$ allows us to compute: $F'_{k}(x_0 || x_1) = F_{k_1}(x_0 || x'_1) + F_{k_0}(x_0 || \textbf{0})$, where $x'_1 \in \mathcal{X}$. The pseudorandomness and security follow from that of our HVL-KIH-PRF family.

\section{QPC-UE-UU}\label{sec9}
In this section, we present the first quantum-safe (Q) post-compromise (PC) secure updatable encryption (UE) scheme with unidirectional updates (UU) as an example application of our KIH-HVL-PRF family. An updatable encryption scheme, {\tt UE}, contains algorithms for a data owner and a host. We begin by recalling the definitions of these algorithms. The owner encrypts the data using a secure encryption algorithm, {\tt UE.enc}, and then outsources the ciphertexts to some host. To this end, the data owner initially runs a key generation algorithm, {\tt UE.setup}, to sample an encryption key. This key evolves with epochs and the data is encrypted with respect to a specific epoch $e$, starting with $e = 0$. When moving from epoch $e$ to epoch $e + 1$, the owner invokes the update token algorithm {\tt UE.next} to generate the key material $k_{e+1}$ for the new epoch and calculate the corresponding update token $\Delta_{e+1}$. The owner then sends $\Delta_{e+1}$ to the host, deletes $k_e$ and $\Delta_{e+1}$ immediately, and uses $k_{e+1}$ for encryption from now on. Definition~\ref{defi} gives a formal description of the procedures and algorithms.

\begin{definition}[\cite{Anja[18]}]\label{defi}
	\emph{An updatable encryption scheme {\tt UE} for message space $\mathcal{M}$ consists of a set of polynomial-time algorithms {\tt UE.setup, UE.next, UE.enc, UE.dec}, and {\tt UE.upd} satisfying the following conditions:\\
	{\tt UE.setup}: The algorithm UE:setup is a probabilistic algorithm run by the owner. On input a security parameter $\lambda$, it returns a secret key $k_0 \xleftarrow{\; \$ \;}$  {\tt UE.setup}$(\lambda)$.\\
	{\tt UE.next}: This probabilistic algorithm is also run by the owner. On input a secret key ke for epoch $e$, it outputs the secret key, $k_{e+1}$, and an update token, $\Delta_{e+1}$, for epoch $e + 1$. That is, $(k_{e+1}, \Delta_{e+1}) \xleftarrow{\; \$ \;}$ {\tt UE.next}$(k_e)$.\\
	{\tt UE.enc}: This probabilistic algorithm is run by the owner, on input a message $m \in \mathcal{M}$ and key $k_e$ of some epoch $e$ returns a ciphertext $C_e \leftarrow{\; \$ \;}$ {\tt UE.enc}$(k_e, m)$.\\
	{\tt UE.dec}: This deterministic algorithm is run by the owner, on input a ciphertext $C_e$ and key $k_e$ of some epoch $e$ returns $\{m', \perp\} \leftarrow$ {\tt UE.dec}$(k_e, C_e)$.\\
	{\tt UE.upd}: This probabilistic algorithm is run by the host. Given ciphertext $C_e$ from epoch $e$ and the update token $\Delta_{e+1}$, it returns the updated ciphertext $C_{e+1} \leftarrow$ {\tt UE.upd}$(\Delta_{e+1}, C_e)$. After receiving $\Delta_{e+1}$, the host first deletes $\Delta_{e}$. Hence, during some epoch $e+1$, the update token $\Delta_{e+1}$ is available at the host, but the tokens from earlier epochs have been deleted.}	
\end{definition}

\begin{figure}
	\fbox{\parbox{\linewidth}{
			\vspace{-4mm}
			\begin{itemize}[leftmargin=1em,label={\textbullet},itemsep=0.15cm]
				\item {\tt \textbf{QPC-UE-UU}.setup}$(\lambda)$: Generate a random encryption key $k_0 \xleftarrow{\; \$ \;}$  {\tt F.KeyGen}$(\lambda)$, and sample a random nonce $N_0 \xleftarrow{\; \$ \;} \mathcal{X}$. Set $e \leftarrow 0$, and return the key for epoch $e = 0$ as: $ki_0 = (k_0, N_0)$. 
				\item {\tt \textbf{QPC-UE-UU}.enc}$(ki_e, m)$: Let $i \xleftarrow{\; \$ \;} \mathcal{X}$ be randomly sampled element that is shared by the host and the owner. Parse $ki_e = (k_e, N_e)$, and return the ciphertext for epoch $e$ as: $C_e = (F'_{k_e}(i, N_e) + m)$. Note that the random nonce ensures that the encryption is not deterministic.
				\item {\tt \textbf{QPC-UE-UU}.dec}$(ki_e, C_e)$: Parse $ki_e = (k_e, N_e)$. Return $m \leftarrow C_e - F'_{k_e}(i, N_e)$.
				\item {\tt \textbf{QPC-UE-UU}.next}$(ki_e)$: 
				\begin{enumerate}[leftmargin=1em,itemsep=0mm]
					\item Sample a random nonce $N_{e+1} \xleftarrow{\; \$ \;} \mathcal{X}$. Parse $ki_e$ as $(k_e, N_e)$.
					\item For epoch $e+1$, generate a random encryption key $k_{e+1} \xleftarrow{\; \$ \;}$ {\tt F.KeyGen}$(\lambda)$, and return $\Delta_{e+1} = (\Delta_{e+1}^k, \Delta^N_{e+1})$, where $\Delta^N_{e+1} = N_{e} \bar{\oplus} N_{e+1}$ is the nonce update token, and $\Delta_{e+1}^k = k_{e+1} - k_e$ is the encryption key update token. The key for epoch $e+1$ is $ki_{e+1} = (2 k_e - k_{e+1}, N_{e+1})$, where $2 k_e - k_{e+1}$ is the encryption key.
				\end{enumerate}
				\item {\tt \textbf{QPC-UE-UU}.upd}$(\Delta_{e+1},C_e)$: Update the ciphertext as: $C_{e+1} = C_e - F'_{\Delta_{e+1}^k}(i, \Delta^N_{e+1})$\\ $= F'_{k_e}(i, N_e) + m - F'_{k_{e+1} - k_e}(i, N_{e} \bar{\oplus} N_{e+1})$ \\$= F'_{-k_{e+1} + 2k_e}(i, N_e - (N_e \bar{\oplus} N_{e+1})) + m = F'_{2k_e - k_{e+1}}(i, N_{e+1}) + m$.
			\vspace{-4mm}
	\end{itemize}}}
	\caption{Quantum-safe, post-compromise secure updatable encryption scheme with unidirectional updates, ({\tt \textbf{QPC-UE-UU}}).}\label{UEscheme}
\end{figure}

\subsection{Settings and Notations}\label{settings}
Let $i \xleftarrow{\; \$ \;} \mathcal{X}$ be the identifier for data block $d_i$. Let $F: \mathcal{K} \times \mathcal{X} \times \mathcal{X} \rightarrow \mathcal{Z}$ and $F': \mathcal{K} \times \mathcal{X} \times \mathcal{Y} \rightarrow \mathcal{Z}$ be the functions defined in Equation~\ref{deffunc1} and Equation~\ref{deffunc}, respectively. Let {\tt KeyGen}$(\lambda)$ be the key generation algorithm for $F$, where $\lambda$ is the security parameter. Figure~\ref{UEscheme} gives our {\tt \textbf{QPC-UE-UU}} scheme, wherein a random nonce is generated per key rotation, hence ensuring that the encryption remains probabilistic, despite our PRF family being deterministic. 

\subsection{Proof of Unidirectional Updates}
As explained in Section~\ref{settings}, the random nonce ensures that the encryption in our scheme is probabilistic. Hence, the security of our QPC-UE-UU scheme follows from the pseudorandomness of our HVL-KIH-PRF family. We move on to proving that the ciphertext updates performed by our scheme are indeed unidirectional. Recall that in schemes with unidirectional updates, an update token $\Delta_{e+1}$ can only be used to move ciphertexts from epoch $e$ into epoch $e+1$, but not vice versa. The notations used in the proof are the same as given in Figure~\ref{UEscheme}.

\begin{lemma}
	For the HVL-KIH-PRF family, $\mathcal{F'}: \mathcal{K} \times \mathcal{X} \times \mathcal{Y} \rightarrow \mathcal{Z}$, as defined in Corollary~\ref{PRFcor},z with $\mathcal{X} = \{0,1\}^{|T|}, \mathcal{Y} = \{0,1,\bar{0}\}^{|T|}$ and $\mathcal{Z} = \mathbb{Z}^{nd \times nd}_p$: given ciphertext, $C_{e+1}$ and update token $\Delta_{e+1} = (k_{e+1} - k_e, N_{e} \bar{\oplus} N_{e+1})$ for epoch $e+1$, the following holds for a polynomial adversary $\mathcal{A}$, randomly sampled $Q \xleftarrow{\; \$ \;} \mathcal{Z}$ and security parameter $\lambda$: \vspace{-3mm}
	\[Pr[C_e] = Pr[Q] \pm \epsilon(\lambda),\]
where $\epsilon(\lambda)$ is a negligible function.	
\end{lemma}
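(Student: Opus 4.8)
The plan is to reduce the claim to the pseudorandomness of the family $\mathcal{F'}$ established in Corollary~\ref{PRFcor}. First I would substitute the explicit forms from Figure~\ref{UEscheme}: the target is $C_e = F'_{k_e}(i, N_e) + m$, while the adversary's view is $C_{e+1} = F'_{2k_e - k_{e+1}}(i, N_{e+1}) + m$ together with $\Delta_{e+1} = (\Delta_{e+1}^k, \Delta^N_{e+1})$, where $\Delta_{e+1}^k = k_{e+1} - k_e$ and $\Delta^N_{e+1} = N_e \bar{\oplus} N_{e+1}$. The object is to show that, relative to this view, $C_e$ is computationally indistinguishable from a uniform $Q \xleftarrow{\; \$ \;} \mathcal{Z}$.

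The core observation is that recovering $C_e$ amounts to evaluating $F'_{k_e}(i, N_e)$, and the update deliberately hides both coordinates of this evaluation point. On one side, $k'_e := 2k_e - k_{e+1}$ is the freshly sampled, uncompromised current key; since $k_e = k'_e + \Delta_{e+1}^k$ with $\Delta_{e+1}^k$ public, the key $k_e$ is itself uniform and secret from the adversary, hence a legitimate PRF seed. On the other side, the almost-XOR token hides $N_e$: by Table~\ref{tab1}, inverting $\bar{\oplus}$ to obtain $N_e = N_{e+1} \ominus \Delta^N_{e+1}$ requires the withheld operand $N_{e+1}$, so $N_e$ is not determined by $\Delta^N_{e+1}$ alone. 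Thus neither $k_e$ nor the input $(i, N_e)$ is pinned down by the view.

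With these facts I would build the reduction. Given a distinguisher $\mathcal{A}$ separating $C_e$ from uniform with advantage $\delta$, I construct a distinguisher $\mathcal{B}$ against $\mathcal{F'}$ that treats $k_e$ as the hidden PRF seed. $\mathcal{B}$ receives its challenge --- either $F'_{k_e}(i, N_e)$ or a uniform element of $\mathcal{Z}$ --- uses it as the $F'$-part of a candidate $C_e$, adds the message $m$, simulates the companion pair $(C_{e+1}, \Delta_{e+1})$ consistently, and runs $\mathcal{A}$. A genuine PRF value yields a correctly distributed $C_e$, a uniform challenge a uniform one, so $\mathcal{B}$ inherits advantage $\delta$; since Corollary~\ref{PRFcor} bounds $\delta$ by a negligible $\epsilon(\lambda)$ under the decision-LWE${}_{n,q,\chi}$ assumption, we conclude $\Pr[C_e] = \Pr[Q] \pm \epsilon(\lambda)$.

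The main obstacle is precisely this simulation, because the adversary does possess genuine correlations: $C_{e+1}$ is an $\mathcal{F'}$-evaluation under the \emph{related} key $k'_e = k_e - \Delta_{e+1}^k$ and at the \emph{related} nonce $N_{e+1}$, linked to $C_e$ by exactly the token in the view. The delicate step is to rule out that $\mathcal{A}$ exploits the key/input homomorphism of Theorem~\ref{thm1} to transform $C_{e+1}$ back into $C_e$. Any such reversal would have to cancel the token using the shared identifier $i$ and the nonce $N_{e+1}$; since neither appears in the view and $\bar{\oplus}$ is non-invertible without one operand, I would argue via an invariant that every admissible homomorphic manipulation leaves $(i, N_e)$ uncomputed, so a purported reversal forces an $\mathcal{F'}$-evaluation at a point the reduction never supplies, contradicting Corollary~\ref{PRFcor}. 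Converting this ``no admissible reversal'' intuition into a rigorous invariant is where the real difficulty lies.
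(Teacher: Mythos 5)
Your proposal stops exactly where the paper's proof begins. You correctly identify that the whole difficulty is ruling out a homomorphic ``reversal'' of $C_{e+1}$ into $C_e$ using the token, but you explicitly defer this (``converting this `no admissible reversal' intuition into a rigorous invariant is where the real difficulty lies''), and that deferred step is the entire content of the paper's argument. The paper does not set up a simulator at all: it observes that, by the bi-homomorphism, the only token-assisted manipulations of $C_{e+1}$ are $C_{e+1} \pm F'_{\Delta^k_{e+1}}(i,\Delta^N_{e+1})$, and it checks both signs. Addition restores the key ($2k_e-k_{e+1}+k_{e+1}-k_e = k_e$) but pushes the nonce to $N_{e+1}\bar{\oplus}(N_e\bar{\oplus}N_{e+1})$, which is not even a well-defined element of $\mathcal{Y}$ since $\bar{\oplus}$ only maps $\mathcal{X}\times\mathcal{X}\to\mathcal{Y}$; subtraction restores the nonce $N_e$ but moves the key to $3k_e-2k_{e+1}\neq k_e$. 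Recovering the key and recovering the nonce are therefore mutually exclusive, and the conclusion follows from Corollary~\ref{PRFcor}. A proof of this lemma that omits this case analysis (or an equivalent exhaustion of the adversary's homomorphic options) has not established anything.

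Separately, the reduction you sketch is not merely difficult to complete --- as stated it does not go through. Your distinguisher $\mathcal{B}$ must hand $\mathcal{A}$ the pair $(C_{e+1},\Delta_{e+1})$ consistently with the challenge value standing in for $F'_{k_e}(i,N_e)$. But $C_{e+1}=F'_{2k_e-k_{e+1}}(i,N_{e+1})+m$ is itself an evaluation under a key related to the hidden seed $k_e$ by the public token; $\mathcal{B}$ can only produce it by applying the homomorphism to the very value it is supposed to be testing (or by querying $F'_{k_e}$ at the challenge point, which the PRF game forbids). When the challenge is uniform, the homomorphically derived $C_{e+1}$ no longer has the right joint distribution with $\Delta_{e+1}$, so the two worlds $\mathcal{B}$ presents to $\mathcal{A}$ are not the two worlds of the lemma. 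Repairing this would require precisely the ``no admissible reversal'' argument you postponed, at which point the direct case analysis of the paper is both simpler and sufficient.
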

\begin{proof}
	The main idea of the proof is that due to the bi-homomorphic property of our HVL-KIH-PRF family, $\mathcal{F'}$, the adversary, $\mathcal{A}$, can only revert back to either the key $k_e$ or the nonce $N_e$, but not both. In other words, $\mathcal{A}$ cannot recover $C_e = F'_{k_e}(i, N_e) + m$. We split the proof into two portions, one concentrating on reverting back to $k_e$ as the function key, and the other one focusing on moving back to $N_e$ as the function input. We prove that these two goals are mutually exclusive for our scheme, i.e., both of them cannot be achieved together.\\[1.5mm]
	\textbf{Case 1: Reverting to $k_e$.} Recall from Table~\ref{tab1} that the only well-defined operation for the operand $\bar{0}$ is $\bar{0} = 0 + 0$. We know that the ciphertext update for epoch $e+1$ is performed as: $C_{e+1} = C_e - F'_{\Delta_{e+1}^k}(i, \Delta^N_{e+1})$. Since $\mathcal{F'}$ is a PRF family, the only way to revert back to $F'_{k_e} \in \mathcal{F'}$ via $C_{e+1}$ and $\Delta_{e+1}$ is by computing: 
	\begin{align*}
		C_{e+1} + F'_{\Delta_{e+1}^k}(i, \Delta^N_{e+1}) &= F'_{2k_e - k_{e+1}}(i, N_{e+1}) + m + F'_{k_{e+1} - k_e}(i, N_e \bar{\oplus} N_{e+1}) \\
														&= F'_{k_e}(i, N_{e+1} \bar{\oplus} (N_e \bar{\oplus} N_{e+1})).
	\end{align*}
	Due to the key homomorphism exhibited by $\mathcal{F'}$, no other computations would lead to the target key $k_e$. We know that $\Delta^N_{e+1} (=N_{e} \bar{\oplus} N_{e+1}) \in \mathcal{Y}$, and that $N_{e+1}, N_e \in \mathcal{X}$. Therefore, the output of the above computation is not well-defined since it leads to $\Delta^N_{e} \bar{\oplus} N_{e+1} \notin \mathcal{Y}$ as being the  input to $F'_{k_e}(i, \cdot) \in \mathcal{F'}$. Hence, when $\mathcal{A}$ successfully reverts back to the target key $k_e$, the nonce deviates from $N_e$ (and the domain $\mathcal{Y}$ itself). \\[1.5mm]
	\textbf{Case 2: Reverting to $N_e$.} Given $C_{e+1}$ and $\Delta_{e+1}$, $\mathcal{A}$ can revert back to $(i, N_e)$ as the function input by computing: $C_{e+1} - F'_{\Delta_{e+1}^k}(i, \Delta^N_{e+1}) = F'_{3k_e - 2k_{e+1}}(i, N_{e}) + m.$ By virtue of the almost XOR operation and the operand $\bar{0}$, the only way to revert back to $N_{e} \in \mathcal{X}$ from $\Delta^N_{e+1} (=N_{e} \bar{\oplus} N_{e+1}) \in \mathcal{Y}$ is via subtraction. But, as shown above, subtraction leads to $F'_{3k_e - 2k_{e+1}}(i, N_e)$ instead of $F'_{k_e}(i, N_e)$. Hence, the computation that allows $\mathcal{A}$ to successfully revert back to $N_e$ as the function input, also leads the function's key to deviate from $k_e$.
	
	Considering the aforementioned arguments, it follows from Corollary~\ref{PRFcor} that $\forall Q \xleftarrow{\; \$ \;} \mathcal{Z}$, it holds that: $Pr[C_e] = Pr[Q] \pm \epsilon(\lambda).$ \qed
\end{proof}

\section{Open Problem: Novel Searchable Encryption Schemes}\label{sec10}
Searchable symmetric encryption (SSE)~\cite{Curt[06]} allows one to store data at an untrusted server, and later search the data for records (or documents) matching a given keyword. Multiple works~\cite{Benny[09],Chang[05],Curt[06],Goh[03],Kamara[12],Kamara[13],Kuro[12],Song[2000],Peter[10]} have studied SSE and provided solutions with varying trade-offs between security, efficiency, and the ability to securely update the data after it has been encrypted and uploaded. A search pattern~\cite{Curt[06]} is defined as any information that can be derived or inferred about the keywords being searched from the issued search queries. In a setting with multiple servers hosting unique shares of the data (generated via threshold secret sharing~\cite{Shamir[79]}), our HVL-KIH-PRF family may be useful in realizing SSE scheme that hides search patterns. For instance, if there are $n$ servers $S_1, S_2, \dots, S_n$, then $n$ random keys $k_1, k_2, \dots, k_n$ can be distributed among them in a manner such that any $t$-out-of-$n$ servers can combine their respective keys to generate $k = \sum_{j=1}^n k_j$. 

If the search index is generated via our HVL-KIH-PRF function $F'_k(i, \cdot) \in \mathcal{F'}$, where $i$ is a fixed database identifier, then to search for a keyword $x$, the data owner can generate a unique, random query $x_j$ for each server $S_j~(1 \leq j \leq n)$. Similar to the key distribution, the data owner sends the queries to the servers such that any $t$ of them can compute $x = \myplus\limits_{j=1}^n x_j$. On receiving search query $x_j$, server $S_j$ uses its key $k_j$ to evaluate $F'_{k_j}(i, x_j)$. If at least $t$ servers reply, the data owner can compute $\sum\limits_{j=1}^t F'_{k_j}(i, x_j) = F'_{k}(i, x)$. Designing a compact search index that does not leak any more information than what is revealed by the PRF evaluation is an interesting open problem, solving which would complete this SSE solution.

\section{Conclusion}\label{sec11}
Key-homomorphic PRFs have found a multitude of interesting applications in cryptography. In this paper, we further expanded the domain of PRFs by introducing bi-homomorphic PRFs, which exhibit full homomorphism over the key and partial homomorphism over the input, which also leads to homomorphically-induced variable input length. We presented a LWE-based construction for such a PRF family, which is inspired by the key-homomorphic PRF construction given by Banerjee and Peikert~\cite{Ban[14]}. 

We use our novel PRF family to develop the first quantum-safe, post-compromise secure updatable encryption scheme with unidirectional updates. The homomorphically-induced variable input length of our PRF family allowed us to address the open problem of achieving unidirectional updates for updatable encryption. As a special case of our PRF family, we introduced key-homomorphic constrained-PRF with homomorphically-induced variable input length. We leave as an open problem the question of using our PRF family to design search pattern hiding searchable symmetric encryption schemes. With appropriately designed search index, the bi-homomorphism property of our PRF family should allow random search queries, hence leading to private search patterns. 

\bibliography{References}{}
\bibliographystyle{plain}

\end{document}